\numberwithin{equation}{section}
\newtheorem{Theorem}{Theorem}[section]
\newtheorem{Proposition}[Theorem]{Proposition}
\begin{document}

\newcommand{\arXivNumber}{1812.05535}

\renewcommand{\PaperNumber}{054}

\FirstPageHeading

\ShortArticleName{Interpolations between Jordanian Twists Induced by Coboundary Twists}

\ArticleName{Interpolations between Jordanian Twists\\ Induced by Coboundary Twists}

\Author{Andrzej BOROWIEC~$^{\dag^1}$, Daniel MELJANAC~$^{\dag^2}$, Stjepan MELJANAC~$^{\dag^3}$ and Anna PACHO{\L}~$^{\dag^4}$}
\AuthorNameForHeading{A.~Borowiec, D.~Meljanac, S.~Meljanac and A.~Pacho{\l}}

\Address{$^{\dag^1}$~Institute of Theoretical Physics, University of Wroclaw,\\
\hphantom{$^{\dag^1}$}~pl.~M.~Borna 9, 50-204 Wroclaw, Poland}
\EmailDD{\href{mailto:andrzej.borowiec@ift.uni.wroc.pl}{andrzej.borowiec@ift.uni.wroc.pl}}

\Address{$^{\dag^2}$~Division of Materials Physics, Ruder Bo\v{s}kovi\'c Institute,\\
\hphantom{$^{\dag^2}$}~Bijeni\v{c}ka~c.54, HR-10002~Zagreb, Croatia}
\EmailDD{\href{mailto:Daniel.Meljanac@irb.hr}{Daniel.Meljanac@irb.hr}}

\Address{$^{\dag^3}$~Division of Theoretical Physic, Ruder Bo\v{s}kovi\'c Institute,\\
\hphantom{$^{\dag^3}$}~Bijeni\v{c}ka~c.54, HR-10002~Zagreb, Croatia}
\EmailDD{\href{mailto:meljanac@irb.hr}{meljanac@irb.hr}}

\Address{$^{\dag^4}$~Queen Mary, University of London, Mile~End~Rd., London~E1~4NS, UK}
\EmailDD{\href{mailto:a.pachol@qmul.ac.uk}{a.pachol@qmul.ac.uk}}

\ArticleDates{Received February 15, 2019, in final form July 11, 2019; Published online July 21, 2019}

\Abstract{We propose a new generalisation of the Jordanian twist (building on the previous idea from [Meljanac S., Meljanac D., Pacho{\l} A., Pikuti\'c D., \textit{J.~Phys.~A: Math. Theor.} \textbf{50} (2017), 265201, 11~pages]). Obtained this way, the family of the Jordanian twists allows for interpolation between two simple Jordanian twists. This new version of the twist provides an example of a new type of star product and the realization for noncommutative coordinates. Real forms of new Jordanian deformations are also discussed. Exponential formulae, used to obtain coproducts and star products, are presented with details.}

\Keywords{twist deformation; Hopf algebras; coboundary twists; star-products; real forms}

\Classification{81T75; 16T05; 17B37; 81R60}

\section{Introduction}
\label{intro} Let $\mathcal{H}=( H,\Delta ,S,\epsilon) $ be a Hopf algebra and $F\in H\otimes H$ be a two-cocycle twist. Then new (twisted) Hopf algebra structure on the algebra $H$ with deformed coproduct and antipode is denoted by $\mathcal{H}^{F}=\big( H,\Delta ^{F},S^{F},\epsilon \big) $, where $\Delta ^{F}(\cdot )=F\Delta (\cdot)F^{-1}$.

For any invertible element $\omega \in H$ one can define new gauge equivalent two-cocycle twist $F_{\omega }=\big(\omega ^{-1}\otimes \omega^{-1}\big)F\Delta (\omega )$ which determines the third Hopf algebra $\mathcal{H}^{F_{\omega }}=\big(H,\Delta ^{F_{\omega }},S^{F_{\omega }},\epsilon \big) $. Notice that all three Hopf algebras share the same algebraic structure (multiplication). Its internal automorphism, defined by the similarity transformation: $\alpha (Z) =\omega Z\omega ^{-1}$, $Z\in H$ establishes, at the same
time, the isomorphism between two twisted Hopf algebras $\mathcal{H}^{F}\cong \mathcal{H}^{F_{\omega }}$ as illustrated on the following diagram
\begin{gather*}
\begin{array}{ccc}
H & \xrightarrow{\ \ \ \Delta ^{F}\ \ \ } & H\otimes H \\
& & \\
\alpha \downarrow & & \downarrow \alpha \otimes \alpha \\
& & \\
H & \xrightarrow{\ \Delta ^{F_\omega} } & H\otimes H,
\end{array}
\end{gather*}
i.e.,
\begin{gather*} 
\Delta ^{F}\circ \alpha = ( \alpha \otimes \alpha ) \circ \Delta ^{F_{\omega }}.
\end{gather*}

Consider now a (left) Hopf module algebra $\mathcal{A}=(A,\triangleright,\star )$ over the Hopf algebra $\mathcal{H}$ together with a (left) Hopf action $\triangleright \colon H\otimes A\rightarrow A$, where $\star $ denotes the multiplication in $A$. Changing multiplication $m_{\star }\equiv \star $ to $m_{\star _{F}}\equiv \star _{F}$:
\begin{gather}
a\star _{F}b=m_{\star _{F}}(a\otimes b) =m_{\star}\big[F^{-1}(\triangleright \otimes \triangleright )(a\otimes b)\big] \label{stp}
\end{gather}%
for $a,b\in A$, one gets new module algebra $\mathcal{A}_{F}=(A,\triangleright ,\star _{F})$ over $\mathcal{H}^{F}$ with the same action, i.e., the module structure remains the same. It is easy to see that any invertible element $\omega \in H$ provides an algebra isomorphism $\beta\colon (A,\star _{F_{\omega }})\rightarrow (A,\star _{F})$, where $\beta (a)\equiv \omega \triangleright a$, since $\omega \triangleright (a\star _{F_{\omega}}b)=(\omega \triangleright a)\star _{F}(\omega \triangleright b)$. It turns out that the invertible map $\beta $ intertwines between two modules in the following sense: $\alpha (Z)\triangleright \beta (a)=\beta (Z\triangleright a)$, i.e., the diagram
\begin{gather*}
\begin{array}{ccc}
H\otimes A & \xrightarrow{\ \ \ \triangleright\ \ \ } & A \\
& & \\
\alpha \otimes \beta \downarrow & & \downarrow \beta \\
& & \\
H\otimes A & \xrightarrow{\ \triangleright } & A \\
\end{array}
\end{gather*}
commutes. In particular, the coboundary twist $T_{\omega }=\big(\omega^{-1}\otimes \omega ^{-1}\big)\Delta (\omega )$ provides the Hopf algebra isomorphism $\mathcal{H}\cong \mathcal{H}^{T_{\omega }}$ as well as module algebra isomorphism $\mathcal{A}\cong \mathcal{A}_{T_{\omega }}$. For group-like $\omega $, it becomes an automorphism. It means that replacing a~twist by the gauge equivalent one leads to mathematically equivalent objects.

Let us explain this in the case of Jordanian deformations of Lie algebras.

Let's consider a Lie algebra $\mathfrak{g}$. Drinfeld's quantum groups are quantized universal enveloping algebras $U_{\mathfrak{g}}$~\cite{drinfeld} obtained by the methods of deformation quantization of Poisson Lie groups. More exactly, quantized objects are Hopf algebras corresponding to a given Lie bialgebra structure $(\mathfrak{g},r)$ \cite{etingof,etingof+}, which in turn can be determined by the classical $r$-matrix $r\in \mathfrak{g}\otimes \mathfrak{g}$ satisfying classical Yang--Baxter equation $[[r,r]]=0$ with $[[\,,\,]]$ being the so-called Schouten brackets. In the triangular case $r\in \mathfrak{g}\wedge \mathfrak{g}$ the quantization is provided by an~invertible, two-cocycle element $F_{r}\in U_{\mathfrak{g}}\otimes U_{\mathfrak{g}}[[\gamma ]]$ called a Drinfeld twist. Here $\gamma $ is a~formal parameter and $U_{\mathfrak{g}}[[\gamma ]]$ means topological completion in the topology of formal power series in $\gamma $ (see, e.g., \cite{chiari,Comtet, majid} for details). Thus the classical $r$-matrix can be recovered from the quantum $R$-matrix as follows
\begin{gather*}
R_{r}=F_{r}^{\tau}F_{r}^{-1}=1\otimes 1+\gamma r+o\big(\gamma ^{2}\big), 
\end{gather*}
where $\tau$ denotes the flip map: $F^\tau = F_{21}$. For example, in two dimensions there are only two (non-isomorphic) Lie algebra structures: Abelian $\mathfrak{ab}(2)=\{x,y\colon [x,y] =0\}$ and non-Abelian $\mathfrak{an}(2)=\{h,e\colon [h,e] =e\}$.\footnote{It corresponds to the Lie group ``$ax+b$'' of affine transformations of the real (complex) line.} The corresponding Lie bialgebra structures are given by the Abelian: $r_{Ab}=x \wedge y$ or Jordanian $r_{J}=h\wedge e$ classical $r$-matrices. Embedding one of these algebras in some higher-dimensional Lie algebra $\mathfrak{g}$ as Lie subalgebra provides the twist quantization of $U_{\mathfrak{g}}$: Abelian or Jordanian. In the latter case, deformation can be realized by the Jordanian twist of the form
\begin{gather}
F_{J}=\exp ( \ln ( 1+\gamma e ) \otimes h ). \label{FJ}
\end{gather}
This form of the twist first appeared in \cite{Og} and then a symmetrised form (i.e., where $r_{J}$ appears in the first order in expansion of the twist) was proposed in \cite{tolstoy2, tolstoy1}\footnote{These techniques have been used before in the supersymmetric case in order to unitarize super Jordanian twist, see \cite{1, 2, 3}.}
\begin{gather} \label{F_Tol}
F_{T}=\exp \left( \frac{\gamma }{2}(he\otimes 1+1\otimes he)\right) \exp\left( \ln (1+\gamma e)\otimes h\right) \exp \left( -\frac{\gamma }{2}\Delta(he)\right).
\end{gather}
It was obtained by applying the coboundary twist to~\eqref{FJ}. There are many possible $r$-symmet\-ri\-sa\-tions for twists. This is due to the fact that twist deformation is defined up to the so-called gauge transformation (in Drinfeld's terminology), i.e., many twists can provide isomorphically equivalent Hopf algebraic deformations, if they differ by the coboundary twist (see, e.g.,~\cite{majid}).

Jordanian deformations have been of interest for quite some time. For example, Jordanian deformations of the conformal algebra were considered in \cite{PRD-conformal}, as well as in \cite{Ballesteros-Jordconf+,Ballesteros-Jordconf++,Ballesteros-Jordconf}. In \cite{Ballesteros-Jordconf+,Ballesteros-Jordconf++,Ballesteros-Jordconf} the deformations of Anti de Sitter and de Sitter algebras were also investigated. Jordanian deformations have been considered in applications in ${\rm AdS}/{\rm CFT}$ correspondence \cite{MatYosh, MatYosh++,MatYosh+,MatYosh+++}, as integrable deformations of sigma models in relation to deformation of ${\rm AdS}_5$ and supergravity~\cite{tongeren+,tongeren}. Jordanian twists have been applied in deformation of spacetime metrics~\cite{ijgmmp2016}, Maxwell equations and dispersion relations~\cite{jhep17}, as well as classical and quantum mechanics \cite{1903_zagreb}. Gauge theories under Jordanian deformation were also investigated in~\cite{DJP_SIGMA2012}.

The question we would like to address in the present paper is if the mathematical equivalences of coboundary twists, in some physically relevant situations, give rise, to some extent, to physically inequivalent descriptions. For example, when one considers the star product quantization, realizations for noncommutative coordinates, or the form of Heinsenberg algebra.

To this aim we embed our two-dimensional Lie algebra $\mathfrak{an}(2)$ into some bigger Lie algebra which has some potential application in physics. In the present paper we focus our attention on the Lie algebra $\mathfrak{g}=\{P_{\mu },D\}$, generated by momenta $P_{\mu }$ (spacetime translations in $n$-dimensions where $\mu =0,1,\dots,n-1$) and dilatation generators $D$ with the following commutation relations
\begin{gather*}
[ P_{\mu },D] =P_{\mu },\qquad [ P_{\mu },P_{\nu } ] =0.
\end{gather*}
This algebra can be considered as a subalgebra of some bigger Lie algebra, e.g., Poincar\'e--Weyl, de Sitter, etc. This embedding is realized by choosing two elements $\{h,e\}$ as $h=-D$ and $e=P$ ($P$ can be taken as any of $P_{\mu }$ and the formulae in the next Section \ref{II} will hold). However, for convenience, we choose the following notation: $P=v^{\alpha }P_{\alpha }$ where $v^{\mu }$ is the vector on Minkowski spacetime $\mathcal{M}_{1,n-1}$ in $n$-dimensions such that $v^{2}=v^{\alpha }v_{\alpha }\in \{{-1,0,1\}}$. For the correspondence with the $\kappa $-Minkowski spacetime \cite{LukRuegg, LukTol} we choose the deformation parameter as $\gamma =- \frac{1}{\kappa }$.

Our main aim in this work is to analyse quantum deformations corresponding to gauge equivalence of Jordanian twists (\ref{FJ}), extending \cite{MMPP} and applying \cite{mercati}.

This paper is a sequel to~\cite{MMPP}, where we introduced a generalised form of $r$-symmetrised twist interpolating between Jordanian twists. The main formulae are recalled here as part of the next Section~\ref{FL}. In Section~\ref{FR} we propose another generalised form of $r$-symmetrised Jordanian twist ($F_{R,u}$) providing the interpolation between Jordanian twists as well. We present the corresponding Hopf algebra deformation, the star product form and the realization of the noncommutative coordinates. Section~\ref{relations} presents a relation between the two generalisations of $r$-symmetric twists $F_{L,u}$ and $F_{R,u}$, including the relation between the corresponding quantum $R$-matrices. In Section~\ref{star} the $*$-structure and unitarity of the twists is analysed. We finish with brief conclusions which are followed by a series of appendices complementing the results presented in the main part of the paper. They are devoted to the explanation of the exponential formulae which are obtained from twist realization of deformed coordinate functions. Some applications for calculations of wave packets star products as well as coproducts of momenta are also considered.

\section{Two families of twists interpolating between Jordanian twists}\label{II}

In our previous work \cite{MMPP} we have proposed a simple generalisation of the locally $r$-symmetric Jordanian twist \eqref{F_Tol}, resulting in the one-parameter family interpolating between Jordanian twists. All the proposed twists differed by the coboundary twists and produced the same Jordanian deformation of the corresponding Lie algebra. We have proposed a~way, by introducing an additional parameter~$u$, of interpolating between the two Jordanian twists
\begin{gather} \label{F0}
F_{0}=\exp \left( -\ln \left( 1-\frac{1}{\kappa }P\right) \otimes D\right)
\end{gather}
with the logarithm on the left side of the tensor product (cf.\ with~\eqref{FJ}) and
\begin{gather*} 
F_{1}=(F_{0})^\tau|_{-\frac{1}{\kappa }}=\exp \left( -D\otimes \ln \left( 1+\frac{1}{\kappa }P\right) \right)
\end{gather*}
with the logarithm on the right tensor factor and also with the changed sign of the deformation parameter $\kappa $, recall $\tau $ denotes the flip map: $F^\tau = F_{21}$. One can symmetrise these simple Jordanian twists, into a so-called $r$-symmetric form, such that at the first order in the expansion of the twist one gets the classical $r$-matrix corresponding to the given deformation. For the Jordanian deformations it is always $r_{J}$. In this paper we want to present another type of such interpolation.

First, in the below section, we recall few main formulae from \cite{MMPP} and then, in Section~\ref{FR}, we shall propose another interpolation.

\subsection[$F_{L,u}$ family of twists with dilatation on the left]{$\boldsymbol{F_{L,u}}$ family of twists with dilatation on the left}\label{FL}

The $r$-symmetric version of the Jordanian twist \eqref{F0} was introduced in~\cite{tolstoy1} and it was obtained from the coboundary twist $T_\omega$ by choosing $\omega _{0}=\exp \big({-} \frac{1}{2\kappa }DP\big)$. The formula for $F_{T}$ follows directly from: $\big(\omega _{0}^{-1}\otimes \omega_{0}^{-1}\big)F_{0}\Delta (\omega _{0})$.

In \cite{MMPP} we have introduced its generalisation in the form of one parameter family interpolating between Jordanian twists $\forall\, u\in \mathbb{R}$
\begin{gather}
F_{L,u}=\exp \left( \frac{u}{\kappa }(DP\otimes 1+1\otimes DP)\right) \exp\left( -\ln \left(1-\frac{1}{\kappa }P\right)\otimes D\right)\nonumber\\
\hphantom{F_{L,u}=}{} \times\exp\left( \Delta\left(-\frac{u}{\kappa }DP\right)\right),\qquad u\in \mathbb{R}. \label{FLu}
\end{gather}
This generalisation simply corresponds to a modification of $\omega _{0}$ to $\omega _{L}=\exp \big( {-}\frac{u}{\kappa }DP\big) $ and then $F_{L,u}=\big(\omega _{L}^{-1}\otimes \omega _{L}^{-1}\big)F_{0}\Delta (\omega _{L})$ still differs from $F_{0}$ only by the coboundary twist $T_{\omega _{L}}$. For this reason the cocycle condition (see, e.g.,~\cite{majid}) for $F_{L,u}$ is automatically satisfied. For $u=\frac{1}{2}$~\eqref{F_Tol} is recovered.

Note that now $F_{L,u}$ contains two parameters: one real parameter $u$ and the other $\kappa $ -the formal deformation parameter.

The reduction of $F_{L,u}$, for certain values of the parameter $u$, to $F_{0}$ (for $u=0$) and to $F_{1}$ (for $u=1$) were discussed in~\cite{MMPP}.

\subsubsection{Hopf algebra}\label{HAFL}

The deformation of the Hopf algebra $U_{\mathfrak{g}} ( \mu ,\Delta ,\epsilon ,S ) $ of the universal enveloping algebra of $\mathfrak{g}$ given by the twist element $F\in U_{\mathfrak{g}}\otimes U_{\mathfrak{g}}[[\frac{1}{\kappa }]]$ into $U_{\mathfrak{g}}^{F}\big( \mu ,\Delta^{F},\epsilon ,S^{F}\big) $ is provided by the deformation of the
coproduct and antipode maps as follows: $\Delta^F (Z) =F\Delta(Z) F^{-1}$, $S^{F}(Z)=\big[\mu ( (1\otimes S)F ) ]S(Z)[\mu\big( (S\otimes 1)F^{-1}\big) \big]$, where $Z\in
\mathfrak{g}$. The coproducts, star products and realizations depend explicitly on the parameter $u$ as well as on the parameter of deformation~$\kappa$.

We recall the coalgebra sector of the Hopf algebra $U_{\mathfrak{g}}^{F_{0}}$ for the deformation with the twisting element $F_{0}$ from~\cite{BP}
\begin{gather*}
\Delta ^{F_{0}}(P_{\mu }) =P_{\mu }\otimes 1+\left(1-\frac{1}{\kappa }P\right)\otimes P_{\mu } ,\qquad \Delta ^{F_{0}}(D)=D\otimes 1+\frac{1}{1-\frac{1}{\kappa }P}\otimes D, \\
S^{F_{0}}(P_{\mu }) =\frac{-P_{\mu }}{1-\frac{1}{\kappa }P},\qquad S^{F_{0}}(D)=-\left(1-\frac{1}{\kappa }P\right)D .
\end{gather*}
The change of the twist by the coboundary twist $T_{\omega }$ provides a new presentation for the Hopf algebra, and can be transformed by $\big( \alpha ^{-1}\otimes \alpha ^{-1}\big) \circ \Delta ^{F_{0}}\circ \alpha =\Delta ^{F_{\omega }}$ where $\alpha (Z) =\omega Z\omega ^{-1}$, $Z\in H $ and $\alpha ^{-1}(Z) =\omega ^{-1}Z\omega $ for chosen $\omega. $ For $F_{L,u}$ one needs to take $\omega _{L}$.

Alternatively, the deformed coproducts and antipodes can be calculated directly from the definition $\Delta Z =F\Delta(Z) F^{-1}$. The coalgebra sector of the Hopf algebra $U_{\mathfrak{g}}^{F_{L,u}}$ for the deformation with $F_{L,u}$, recalled from~\cite{MMPP}, is as follows
\begin{gather*}
\Delta ^{F_{L,u}}(P_{\mu }) =\frac{P_{\mu }\otimes \big(1+\frac{u}{\kappa }P\big)+\big(1-\frac{(1-u)}{\kappa }P\big)\otimes P_{\mu }}{1\otimes 1+u(1-u)\left( \frac{1}{\kappa }\right) ^{2}P\otimes P} ,\\ 
\Delta ^{F_{L,u}}(D) =\left( D\otimes \frac{1}{1+u\frac{1}{\kappa }P}+\frac{1}{1-(1-u)\frac{1}{\kappa }P}\otimes D\right) \left( 1\otimes 1+u(1-u)\left(
\frac{1}{\kappa }\right) ^{2}P\otimes P\right) .
\end{gather*}
The coproduct is coassociative. The antipodes are given by
\begin{gather*}
S^{F_{L,u}}(P_{\mu }) =-\frac{P_{\mu }}{1-(1-2u)\frac{1}{\kappa }P},\\ 
 S^{F_{L,u}}(D) = -\left(\frac{1-(1-2u)\frac{P}{\kappa}}{1+\frac{u}{\kappa}P}\right) D\left(1+\frac{u}{\kappa}P\right).
\end{gather*}

\subsubsection{Coordinate realizations and star product}\label{CoordFL}

As a Hopf module algebra for $U_{\mathfrak{g}}$ we choose the algebra of smooth (complex valued) functions on a space time (i.e., $A=C^{\infty }(\mathbb{R}^{n})\otimes \mathbb{C}$ with an obvious algebraic structures determined by point-wise multiplication and addition). This algebra includes spacetime coordinates~$x^{\mu }$ (where~$x^{\mu }$ are considered as generators of $n$-dimensional Abelian Lie algebra). A natural action of~$\mathfrak{g}$ on~$A$ (i.e., action of the momenta and the dilatation operators on coordinates) is defined, in terms of Weyl--Heisenberg algebra generators $[\partial_\mu, x^\nu]=\delta^\nu_\mu$,\footnote{This algebra has structure of a smash product and can be arranged in a Hopf algebroid structure \cite{DSRsmash,bp_jpa2016,rina-PLA2013,IJMPA2014}.} by
\begin{gather}
P_{\mu }\triangleright f(x)=-{\rm i}\frac{\partial }{\partial x^{\mu }}\triangleright f(x)=-{\rm i}\frac{\partial f\left( x\right) }{\partial x^{\mu }}
\qquad \text{and} \qquad D\triangleright f(x)=x^{\mu }\frac{\partial f(x)}{\partial
x^{\mu }}. \label{action}
\end{gather}
The algebra of coordinates becomes noncommutative due to the twist deformation once the usual multiplication is replaced by the star product multiplication (star product quantization)~\eqref{stp} for any $f,g\in C^{\infty }(\mathbb{R}^{n})$. This star product is associative (due to the fact that the twist~$F_{L,u}$ satisfies the cocycle
condition).

When we choose the functions to be exponential functions ${\rm e}^{{\rm i}k\cdot x}$ and ${\rm e}^{{\rm i}q\cdot x}$, then we can define new function $\mathcal{D}_{\mu }(u;k,q)$
\cite{hreal, mmss1,mmss2,mmss3,svrtan}:
\begin{gather}
{\rm e}^{{\rm i}k\cdot x}\star {\rm e}^{{\rm i}q\cdot x}=m\big[ \mathcal{F}^{-1}(\triangleright
\otimes \triangleright )\big({\rm e}^{{\rm i}k\cdot x}\otimes {\rm e}^{{\rm i}q\cdot x}\big)\big] ={\rm e}^{{\rm i}\mathcal{D}_{\mu }(u;k,q)x^{\mu }}, \label{stpFL}
\end{gather}%
where $k,q\in \mathcal{M}_{1,n-1}$ (in $n$-dimensional Minkowski spacetime). One can calculate explicitly, see Appendices~\ref{App1} and~\ref{App3}, that in the case of twist $F_{L,u}$ the function $\mathcal{D}_{\mu }(u;k,q)$ is given by
\begin{gather}
\mathcal{D}_{\mu }(u;k,q)=\frac{k_{\mu }\big(1+\frac{u}{\kappa }(v\cdot q)\big)+\big(1-\frac{(1-u)}{\kappa }(v\cdot k)\big)q_{\mu }}{1+\frac{u(1-u)}{\kappa ^{2}}(v\cdot k)(v\cdot q)}. \label{DFL}
\end{gather}%
Directly from the twist we can also calculate the coordinate realizations
\cite{DSRsmash,govindarajan,kov2,EPJC2015,rina-PLA2013,IJMPA2014,pikuticEPJC2017} and for the $F_{L,u}$ twist they have the following form\footnote{For simplicity the matrix notation ${\phi}_{\alpha}{}^{\mu }(P)$ is written as $\phi_{\alpha}^{ \mu } (P)$ throughout the paper.}
\begin{gather}
\hat{x}^{\mu }=m\big[ F_{L,u}^{-1}(\triangleright \otimes 1)(x_{\mu
}\otimes 1)\big] =x^{\mu }\left(1+\frac{u}{\kappa }P\right)+\frac{{\rm i}}{\kappa }v^{\mu}(1-u)D\left(1+\frac{u}{\kappa }P\right) \nonumber\\
\hphantom{\hat{x}^{\mu }}{} =\left(x^{\mu }+\frac{{\rm i}}{\kappa }v^{\mu }(1-u)D\right)\left(1+\frac{u}{\kappa }P\right)
 =x^{\alpha } \phi_{\alpha}^{ \mu } (P),\label{x-from-twist}
\end{gather}
where $v^{\mu }$ is the vector on Minkowski spacetime $\mathcal{M}_{1,n-1}$ in $n$-dimensions such that $v^{2}\in \{-1,0,1\}$ as before. These realizations are also discussed in \cite{kovmel,kmps, mkj,stojic1,mssg,stojic2}.

\subsection[$F_{R,u}$ family of twists with dilatation on the right]{$\boldsymbol{F_{R,u}}$ family of twists with dilatation on the right}\label{FR}

In this paper we want to introduce another version of the generalised Jordanian twist
\begin{gather}
F_{R,u}=\exp \left(\frac{u}{\kappa }(PD\otimes 1+1\otimes PD)\right)\exp \left(\!-\ln \left(1-\frac{1}{\kappa }P\right)\otimes D\right)\exp \left(\Delta\left(-\frac{u}{\kappa }PD\right)\right),\!\!\!\!\! \label{FRu}
\end{gather}
where $u$ is a real parameter $u\in \mathbb{R}.$ We point out that the
sub-index $R$ refers to the position of the dilatation generator, it is on
the right with respect to momenta generators $P$. Due to the position of the
dilatation operator with respect to momenta, this introduces a different
formula than the one considered in \cite{MMPP} and which was recalled in the previous
Section~\ref{FL}, i.e., $F_{L,u}$~\eqref{FLu}.
For the parameter $u=\frac{1}{2}$ the twist $F_{R,\frac{1}{2}}$ is
$r$-symmetric, but is not equal to $F_{L,\frac{1}{2}}$. These two twists differ at subleading order, see~\eqref{rel} in the following section.
The form of the family of twists $F_{R,u}$ can also be easily obtained from
the simple Jordanian twist $F_{0}$ by the transformation with the coboundary
twist $T_{\omega _{R}}$ however this time with the element $\omega _{R}=\exp \big( {-}\frac{u}{\kappa }PD\big) $. The twist $F_{R,u}$, $\forall\, u$, satisfies the normalization and cocycle conditions.

For $u=0$, twist $F_{R,u}$ simplifies to $F_{0}$, easily seen by just
plugging in $u=0$ in the equation \eqref{FRu}, and for $u=1$, it simplifies to the
twist $F_{1}$. Hence $F_{R,u}$ provides another way of interpolating between $F_{0}$ and $%
F_{1}.$

\subsubsection{Hopf algebra}\label{HAFR}

The coalgebra sector of the Hopf algebra $U_{\mathfrak{g}}^{F_{R,u}}$ for
the deformation with $F_{R,u}$ can also be calculated and has the form
\begin{gather}
\Delta ^{F_{R,u}}(P_{\mu }) =\frac{P_{\mu }\otimes \big(1+u\frac{1}{\kappa }
P\big)+\big(1-(1-u)\frac{1}{\kappa }P\big)\otimes P_{\mu }}{1\otimes 1+u(1-u)\left(\frac{1}{\kappa }\right) ^{2}P\otimes P}, \label{copPFR} \\
\Delta ^{F_{R,u}}(D) =\left( 1\otimes 1+\left( \frac{1}{\kappa }\right)
^{2}u(1-u) P\otimes P\right) \nonumber\\
\hphantom{\Delta ^{F_{R,u}}(D) =}{}\times \left( D\otimes \frac{1}{1+u\frac{1}{\kappa }P}+\frac{1}{1-(1-u) \frac{1}{\kappa }P}\otimes D\right).\label{copDFR}
\end{gather}
Antipodes are
\begin{gather}
S^{F_{R,u}}(P_{\mu }) =\frac{-P_{\mu }}{1-(1-2u)\frac{1}{\kappa }P}, \\
S^{F_{R,u}}(D) =-\left(1-(1-u)\frac{P}{\kappa}\right)D \left( \frac{1-(1-2u)\frac{P}{\kappa}}{1-(1-u)\frac{P}{\kappa}}\right).\label{SDFR}
\end{gather}

\subsubsection{Coordinate realizations and star product}

The inverse of the family of twists $F_{R,u}^{-1}$ provides another (new) star
product between the functions \eqref{stp}. If we choose our functions to be
exponential functions ${\rm e}^{{\rm i}k\cdot x}$ and ${\rm e}^{{\rm i}q\cdot x}$, then we can
define new functions $\mathcal{D}_{\mu }(u;k,q)$ and $\mathcal{G}(u;k,q)$ in
the following way \cite{hreal, mercati}
\begin{gather}
{\rm e}^{{\rm i}k\cdot x}\star {\rm e}^{{\rm i}q\cdot x} = m\big[F_{R,u}^{-1}(\triangleright \otimes
\triangleright )\big({\rm e}^{{\rm i}k\cdot x}\otimes {\rm e}^{{\rm i}q\cdot x}\big)\big]={\rm e}^{{\rm i}\mathcal{D}_{\mu
}(u;k,q)x^{\mu }+{\rm i}\mathcal{G}(u;k,q)} \notag\\
\hphantom{{\rm e}^{{\rm i}k\cdot x}\star {\rm e}^{{\rm i}q\cdot x}}{} = {\rm e}^{{\rm i}\mathcal{D}_{\mu }(u;k,q)x^{\mu }}\frac{1}{1+\frac{u(1-u)}{\kappa ^{2}}(v\cdot k) (v\cdot q) },\label{stpFR}
\end{gather}
where $k,q\in \mathcal{M}_{1,n-1}$. Note the difference in the terms on the right hand side between the formula above and the one for $F_{L,u}$ in \eqref{stpFL}. One can calculate, see Appendices~\ref{App2} and~\ref{App3}, that in the case of the twist $F_{R,u}^{-1}$ the function $\mathcal{D}_{\mu }(u;k,q)$ is the same as in equation~\eqref{DFL}. Note that the function $\mathcal{D}_{\mu }(u;k,q)$ can be seen as rewriting the coproduct $\Delta (P_{\mu })$ without using the tensor product notation (denoting left and right leg by $k$ and $q$ respectively). Therefore the relation between the coproduct $\Delta (P_{\mu })$ and the function $\mathcal{D}_{\mu }(u;k,q)$
is given by
\begin{gather*}
\Delta (P_{\mu })=\mathcal{D}_{\mu }(u;P\otimes 1,1\otimes P),\label{Delta-p-D}
\end{gather*}
hence $\Delta (P_{\mu })$ uniquely determines $\mathcal{D}_{\mu }(u;k,q)$, as in the case of $F_{L,u}$ twist.

The additional function on the right hand side of \eqref{stpFR} has the following explicit form
\begin{gather} \label{GFR}
\mathcal{G}(u;k,q)=i\ln \left(1+\frac{u(1-u)}{\kappa ^{2}}( v\cdot k) (v\cdot q) \right).
\end{gather}
We refer the reader to the Appendices~\ref{App2},~\ref{App3} and~\cite{mercati} for further the details of these calculations. Note that in the case of the generalisation of the twist~$F_T$~\eqref{F_Tol} into $F_{L,u}$~\eqref{FLu}, the function $\mathcal{G}(u;k,q)=0$ (see also~\cite{MMPP}).

Realizations of noncommutative coordinates $\hat{x}^{\mu }$ can be generally expressed in terms of Weyl--Heisenberg algebra generated by $x^{\mu }$ and $P_{\mu}$ (commutative variables).

The realization obtained from the $F_{R,u}$ twist has the new general form
\begin{gather*}
\hat{x}^{\mu }=x^{\alpha }\phi_\alpha^\mu(P)+\chi^\mu (P).
\end{gather*}
Note that the part $\chi^\mu (P) $ was not present in the case of $F_{L,u}$ twist, see equation \eqref{x-from-twist} in Section~\ref{CoordFL}, and also~\cite{MMPP}.

Noncommutative coordinates $\hat{x}^{\mu }$, corresponding to the twist $F_{R,u}$, are given by
\begin{gather}
\hat{x}^{\mu } =m\big[F_{R,u}^{-1}(\triangleright \otimes 1)(x^{\mu }\otimes 1)\big] =x^{\mu }\left(1+\frac{u}{\kappa }P\right)+\frac{{\rm i}}{\kappa }v^{\mu }(1-u)\left(1+\frac{u}{\kappa }P\right)D \nonumber\\
\hphantom{\hat{x}^{\mu }}{} =\left(x^{\mu }+\frac{{\rm i}}{\kappa }v^{\mu } ( 1-u ) D\right)\left(1+\frac{u}{\kappa}P\right)+u(1-u)\frac{{\rm i}}{\kappa ^{2}}v^{\mu}P.\label{x-from-FR}
\end{gather}

From the last line in the above formula one can read off explicitly the form of the functions $\phi_\alpha^\mu(P)$ and $\chi^\mu (P) $. In the case when $u=0$, we have $\hat{x}^{\mu }=x^{\mu }+{\rm i}\frac{1}{\kappa }v^{\mu }D$, while in the case when $u=1$, $\hat{x}^{\mu }=x^{\mu }\big(1+\frac{1}{\kappa }P\big)$.

The noncommutative coordinates $\hat{x}^{\mu }$ satisfy 
\begin{gather}
\big[\hat{x}^{\mu },\hat{x}^{\nu }\big] = \frac{{\rm i}}{\kappa }\big(v^{\mu }\hat{x}^{\nu }-v^{\nu }\hat{x}^{\mu }\big), \nonumber\\ 
\big[P_{\mu },\hat{x}^{\nu }\big] = \left(-{\rm i}\delta _{\mu }^{\nu }+{\rm i}\frac{1}{\kappa }v^{\nu }(1-u)P_{\mu }\right)\left(1+u\frac{1}{\kappa }P\right). \label{heis}
\end{gather}

The above kappa deformed Weyl--Heisenberg algebra \eqref{heis} is obtained by using the realization \eqref{x-from-FR}. It turns out to be the same as in \cite{MMPP} where it was obtained from \eqref{x-from-twist}. Summa\-ri\-sing, the realizations~\eqref{x-from-FR} and~\eqref{x-from-twist} differ by the term $\chi^\mu(P)$ which does not change the form of~\eqref{heis}.

\section[Relations between two families $F_{L,u}$ and $F_{R,u}$]{Relations between two families $\boldsymbol{F_{L,u}}$ and $\boldsymbol{F_{R,u}}$}\label{relations}

The two twists $F_{L,u}$ \eqref{FLu} and $F_{R,u}$ \eqref{FRu} are obviously related by the coboundary twists $T_{\omega _{L}}$ and $T_{\omega _{R}}$ in the following sense
\begin{gather*}
F_{R,u}=\big(\omega _{R}^{-1}\omega _{L}\otimes \omega _{R}^{-1}\omega_{L}\big)F_{L,u}\Delta \big( \omega _{L}^{-1}\omega _{R}\big),
\end{gather*}
where $\omega _{L}=\exp \big( {-}\frac{u}{\kappa }DP\big) $ and $\omega _{R}=\exp \big( {-}\frac{u}{\kappa }PD\big)$. Hence
\begin{gather*}
F_{R,u}^{-1}=\exp \left(\Delta\left(\frac{u}{\kappa }PD\right)\right)\exp \left(-\Delta\left(\frac{u}{\kappa }DP\right)\right)F_{L,u}^{-1}\exp \left(\frac{u}{\kappa }(DP\otimes 1+1\otimes DP)\right)\\
\hphantom{F_{R,u}^{-1}=}{} \times \exp \left(-\frac{u}{\kappa }(PD\otimes 1+1\otimes PD)\right) 
\end{gather*}
and
\begin{gather}
F_{L,u}F_{R,u}^{-1}=\Delta ^{F_{L,u}}\left( \exp \left(u\frac{P}{\kappa }D\right)\exp
\left(-uD\frac{P}{\kappa }\right)\right) \exp \left(\frac{u}{\kappa }(DP\otimes 1+1\otimes
DP)\right)\nonumber\\
\hphantom{F_{L,u}F_{R,u}^{-1}=}{}\times \exp \left(-\frac{u}{\kappa }(PD\otimes 1+1\otimes PD)\right),\label{36}
\end{gather}
where we have used the homomorphism property of the coproduct and its deformed form.

We can use the equalities\footnote{Here $\binom{D}{n}$ denotes the generalised binomial coefficient: $\binom{D}{n}=\frac{D^{\underline{n}}}{n!}=\frac{D(D-1)(D-2)\ldots (D-(n-1))}{n(n-1)(n-2)\cdots 1}$.}
\begin{gather*}
{\rm e}^{u\frac{P}{\kappa}D}=\sum_{n=0}^\infty \frac{\left(u\frac{P}{\kappa}D\right)^n}{n!}=\sum_{n=0}^\infty \left(u\frac{P}{\kappa}\right)^n \binom{D}{n}
\end{gather*}
and
\begin{gather*}
{\rm e}^{uD\frac{P}{\kappa}}=\sum_{n=0}^\infty \frac{\left(uD\frac{P}{\kappa}\right)^n}{n!}=\sum_{n=0}^\infty \left(u\frac{P}{\kappa}\right)^n \binom{D-1}{n},
\end{gather*}
where we chose the order with $P$ generators on the left. Now taking the difference of these two expressions we obtain
\begin{gather}
{\rm e}^{u\frac{P}{\kappa}D}-{\rm e}^{uD\frac{P}{\kappa}}=\sum_{n=0}^\infty \left(u\frac{P}{\kappa}\right)^n \left[\binom{D}{n}-\binom{D-1}{n}\right]=\sum_{n=1}^\infty \left(u\frac{P}{\kappa}\right)^n \binom{D-1}{n-1}\nonumber\\
\hphantom{{\rm e}^{u\frac{P}{\kappa}D}-{\rm e}^{uD\frac{P}{\kappa}}}{}=u\frac{P}{\kappa}{\rm e}^{uD\frac{P}{\kappa}}.\label{39}
\end{gather}
Therefore, we find, after multiplying both sides of \eqref{39} by ${\rm e}^{-uD\frac{P}{\kappa}}$ from the right
\begin{gather}\label{epd}
{\rm e}^{u\frac{P}{\kappa}D}{\rm e}^{-uD\frac{P}{\kappa}}=1+u\frac{P}{\kappa}.
\end{gather}
Inserting \eqref{epd} into \eqref{36} we get
\begin{gather*}
F_{L,u}F_{R,u}^{-1}=\Delta^{F_{L,u}}\left(1+u\frac{P}{\kappa}\right)\left(\frac{1}{1\otimes 1+u\frac{P}{\kappa}\otimes 1}\right)\left(\frac{1}{1\otimes 1+u\cdot 1\otimes \frac{P}{\kappa}}\right)\\
\hphantom{F_{L,u}F_{R,u}^{-1}}{} =\frac{1}{1\otimes 1+u(1-u)\frac{P}{\kappa}\otimes \frac{P}{\kappa}},
\end{gather*}
which leads to
\begin{gather}\label{rel}
F_{R,u}^{-1}=F_{L,u}^{-1}\frac{1}{1\otimes 1+\frac{u(1-u) }{\kappa ^{2}}P\otimes P}.
\end{gather}
Note that the twists $F_{L,u}$ and $F_{R,u}$ agree in the leading order of the deformation parameter, but are different at higher orders. We point out that using star product in \eqref{stpFL} and star product in \eqref{stpFR} and
methods introduced in \cite{mercati} one can also obtain the above relation \eqref{rel}.

Also one can write an explicit formula for the relation between $R_{R,u}$ and $R_{L,u}$ quantum $R$-matrices. It has the following form
\begin{gather*}
R_{L,u}=\frac{1}{1\otimes 1+\frac{u(1-u) }{\kappa ^{2}}P\otimes P}R_{R,u}\left( 1\otimes 1+\frac{u(1-u) }{\kappa ^{2}}P\otimes P\right).
\end{gather*}

\subsubsection*{Hopf algebras}\label{relationHA}

The two twists $F_{L,u}$ and $F_{R,u}$ describe two presentations of Hopf algebra (with the same corresponding classical $r$-matrix). The coproducts and antipodes for momenta are the same
\begin{gather*}
\Delta ^{F_{L,u}}(P_{\mu })=\Delta ^{F_{R,u}}(P_{\mu }),\qquad S^{F_{L,u}}(P_{\mu})=S^{F_{R,u}}(P_{\mu }).
\end{gather*}
However this is not the case for dilatations
\begin{gather*}
\Delta ^{F_{L,u}}(D)\neq \Delta ^{F_{R,u}}(D),\qquad S^{F_{L,u}}(D)\neq S^{F_{R,u}}(D).
\end{gather*}

\subsubsection*{Coordinate realizations and star products}\label{relationCoord}

Comparing the realizations we also see the difference. Equation~\eqref{x-from-FR} has an extra term only dependent on momenta whereas~\eqref{x-from-twist} does not.

Similarly in the formulae for the star products, the one coming from $F_{R,u} $ has an addition in the form of the function $\mathcal{G}(u;k,q)$ which does not appear in the star product coming from~$F_{L,u}.$ Nevertheless, these two twists are only a~change by coboundary twists from $F_{0}$ and provide equivalent Hopf algebra deformations (but with different representations).

\section{Discussion on the real forms of the Jordanian deformations}\label{star}
For physical applications it is important to address the question if the symmetry Hopf algebras and the deformed Weyl--Heisenberg algebra can be endowed with (compatible) $*$-structures. In general, $*$-structure (real Lie algebra structure) can be introduced by an antilinear involutive anti-automorphism $*\colon \mathfrak{g}\rightarrow \mathfrak{g}$ acting on the complex Lie algebra $\mathfrak{g}$. Thus the real coboundary Lie bialgebra can be considered as a triple $(\mathfrak{g},*,r)$, where the skew-symmetric element $r$ is assumed to be anti-Hermitian, i.e.,\footnote{More detailed exposition of the background material for the present section, as well as more complete list of references, can be found in~\cite{blt_jhep2017}.}
\begin{gather*}
	r^{*\otimes *}=-r = r^\tau.
\end{gather*}

The $*$-operation extends, by the property $(XY)^*=Y^* X^*$ (i.e., as an antilinear antiautomorphism), to the enveloping algebra $U_\mathfrak{g}$, as well as to quantized enveloping algebra, making both of them associative $*$-algebras. Therefore, any quantized enveloping algebra admits a natural $*$-structure, inherited
from the corresponding Lie bialgebra structure, which can be preserved under twist deformation provided that the twisting element is $*$-unitary, i.e.,
\begin{gather}\label{utw}
F^{*\otimes *}=F^{-1}.
\end{gather}

More exactly, we recall that a complex Hopf algebra $\mathcal{H}= ( H,\Delta ,S,\epsilon ) $ endowed with an antilinear involutive anti-automorphism $*\colon H\rightarrow H$ is called a real Hopf algebra or Hopf $*$-algebra if the following compatibility conditions for coproducts, antipodes and counits are satisfied
\begin{gather*}
\Delta(X^*) = (\Delta(X))^{*\otimes *},\qquad S((S(X^*))^*) = X,\qquad \epsilon(X^*)=\overline{\epsilon(X)}\qquad\forall\, X\in H,
\end{gather*}
where the standard ('unflipped') way of $*$-operation acting on a tensor product (i.e., coproduct) is assumed
\begin{gather*}
(X\otimes Y)^* = X^* \otimes Y^*.
\end{gather*}

In general, the twist deformation of quasitriangular Hopf algebra $(\mathcal{H},R)$ give rise to quasitriangular Hopf algebra with new universal $R$-matrix $R^F=F^\tau R F^{-1}$. If the real form of quasitriangular Hopf algebra is twisted by unitary twist then the deformed Hopf algebra is also real quasitriangular Hopf algebra. More precisely,
if $(H, \Delta, S, \epsilon, R, *)$ is a quasitriangular Hopf $*$-algebra with $R$ being real universal $R$-matrix (i.e., satisfying $R^{*\otimes *}=R^\tau$) (resp.\ antireal if satisfying $R^{*\otimes *}=R^{-1}$ ), then for any unitary, normalized 2-cocycle twist $F=(F^{-1})^{\star\otimes\star}\in H\otimes H$ the quantized algebra $(H, \Delta^F, S^F, \epsilon, R^F, *)$ is a quasitriangular Hopf $*$-algebra such that $R^F=F^\tau R F^{-1}$ is real (resp.\ antireal).

In our case only two Jordanian twists $F_0$ and $F_1$ are unitary, i.e., satisfy the condition~\eqref{utw} provided that we assume the following $*$-conjugations
\begin{gather}\label{conjugation}
P_\mu^*=P_\mu,\qquad\mbox{and}\qquad D^*=-D,
\end{gather}
which are compatible with the commutation relations $[P_\mu, D]=P_\mu$, $[ P_\mu, P_\nu]=0$.\footnote{It can be observed that
these commutation relations determine the generator $D$ up an additive constant while each generator~$P_\mu$ can be multiplied by a constant. For this reason the conjugation $D^*=-D+c$, $P_\mu^*={\rm e}^{{\rm i}b}P_\mu$ ($c$, $b$ are real constants) would be admissible too. In fact, our Weyl--Heisenberg $*$-algebra realisation~\eqref{action} is compatible with the assumption that $D^*=-D-n$, $P_\mu^*=P_\mu$, where $n$ denotes the spacetime dimension.}
Therefore, the corresponding twisted deformations are the Hopf $*$-algebras with the same $*$-structure.
Instead, for generic value of the parameter $u\in\mathbb{R}$ two families of twists
are related to each other by
\begin{gather*}
F_{L,u}^{*\otimes *}=F_{R,u}^{-1} ,\qquad\big[\Delta^{F_{L,u}}(X)\big]^{*\otimes *}=\Delta^{F_{R,u}}(X^*) ,\qquad \big[S^{F_{L,u}}(X)\big]^*=S^{F_{R,1-u}}(X^*)|_{-\kappa}.
\end{gather*}
Therefore, the well-known purely Jordanian twists $F_0$ and $F_1$ remain only unitary with respect to the conjugation
(\ref{conjugation}).
This situation can change, if we use the method proposed by S.~Majid \cite[Proposition~2.3.7, p.~59]{majid} and admit deformation of the original $*$-structure. Before, one needs to check if the twist satisfies the following condition
\begin{gather}\label{Maj_cond}
(S\otimes S)\big(F^{*\otimes *}\big)= F^\tau ,
\end{gather}
where the antipode map $S$ and the conjugations $*$ are taken before deformation. If it does, then one can define new quantized $\dag$-structure:
\begin{gather*}
(\quad)^{\dag} = S^{-1}(U) (\quad)^* S^{-1}\big(U^{-1}\big),
\end{gather*}
which makes the twisted Hopf algebra real. Here $U$ is related with the twist by the formula $ U =\sum\limits_{i} f^{(1)}_{i} S \big( f^{(2)}_{i}\big)$ and
$ U^{-1} =\sum\limits_{i} S\big(f^{-(1)}_{i}\big) f^{-(2)}_{i}$ with the short cut notation for the twist introduced as follows: $F = \sum\limits_{i} f^{(1)}_{i} \otimes f^{(2)}_{i} $ and its inverse as: $F^{-1} = \sum\limits_{i} f^{-(1)}_{i} \otimes f^{-(2)}_{i} $.

\subsection[New quantized $\dag$-structure for $F_{L,u=\frac{1}{2}}$ twist]{New quantized $\boldsymbol{\dag}$-structure for $\boldsymbol{F_{L,u=\frac{1}{2}}}$ twist}

One can explicitly check that only for one value of the parameter $u$, namely $u=\frac{1}{2}$\footnote{Note that only for this value of the parameter $u$ the twists $F_{R,u}$ and $F_{L,u}$ are $r$-symmetric.} the condition~\eqref{Maj_cond} for the Majid's method is satisfied. By direct calculation one gets
\begin{gather*}
\text{l.h.s.}=(S\otimes S)\big(F_{L,\frac{1}{2}}^{*\otimes *}\big)=\big(F_{L,\frac{1}{2}}\big)\big|_{-\kappa}
\end{gather*}
and
\begin{align*}
\text{r.h.s.}&=(F_{L,\frac{1}{2}})^\tau=\exp \left( \frac{1}{2\kappa }(DP\otimes 1+1\otimes DP)\right) (F_0)^\tau \exp\left( \Delta\left(-\frac{1}{2\kappa }DP\right)\right)\\
& = \exp \left( \frac{1}{2\kappa }(DP\otimes 1+1\otimes DP)\right) (F_1)|_{-\kappa} \exp\left( \Delta\left(-\frac{1}{2\kappa }DP\right)\right)\\
& = \left[\exp \left( - \frac{1}{2\kappa }(DP\otimes 1+1\otimes DP)\right) F_1 \exp\left( \Delta\left(\frac{1}{2\kappa }DP\right)\right)\right]|_{-\kappa}\\
& = \left[\exp \left( \frac{1}{2\kappa }(DP\otimes 1+1\otimes DP)\right) F_0 \exp\left( - \Delta\left(\frac{1}{2\kappa }DP\right)\right)\right]|_{-\kappa}
= \big(F_{L,\frac{1}{2}}\big)|_{-\kappa}.
\end{align*}
Hence
\begin{gather*}
(S\otimes S)\big(F_{L,\frac{1}{2}}^{*\otimes *}\big)=\big(F_{L,\frac{1}{2}}\big)\big|_{-\kappa}= \big(F_{L,\frac{1}{2}}\big)^\tau.
\end{gather*}
Therefore
\begin{gather*}
X^\dag=-S^{F_{L,\frac{1}{2}}}(X^*), \qquad\forall\, X\in H.
\end{gather*}
Analogously, one can show that $F_{R,\frac{1}{2}}$ also satisfies the condition \eqref{Maj_cond}
\begin{gather*}
(S\otimes S)\big(F_{R,\frac{1}{2}}^{*\otimes *}\big)=\big(F_{R,\frac{1}{2}}\big)\big|_{-\kappa}= \big(F_{R,\frac{1}{2}}\big)^\tau
\end{gather*}
and the new star structure
\begin{gather*}
X^\dag=-S^{F_{R,\frac{1}{2}}}(X^*), \qquad\forall\, X\in H.
\end{gather*}
It shows that for $u={1\over 2}$ we can introduce an exotic Hopf $\dag$-algebra structure which, however, in the classical limit reduces to the standard $*$ one.

\subsection[Coboundary Hopf $*$-algebra]{Coboundary Hopf $\boldsymbol{*}$-algebra}

Actually, the best way to obtain coboundary unitary twist preserving a given $*$-Hopf algebra structure is by using a unitary coboundary element $\omega^*=\omega^{-1}$. As a particular example let us consider a one-parameter family of such elements
\begin{gather*}
\omega_{LR}=\exp\left(-\frac{u}{2\kappa}(DP+PD)\right).
\end{gather*}
Then the new generalised family of the Jordanian twists
\begin{gather*}
F_{LR,u}=\big(\omega_{LR}^{-1}\otimes \omega_{LR}^{-1} \big)F_{0}\Delta (\omega_{LR})
\end{gather*}
is unitary for any real parameter $u$ and satisfies
\begin{gather*}
F_{LR,u}^{*\otimes *}=F_{LR,u}^{-1}
\end{gather*}
with $P_\mu^*=P_\mu$, $D^*=-D $. Note that $F_{LR,u}$ is $r$-symmetric again only for $u=\frac{1}{2}$ and differs from both $F_{L,\frac{1}{2}}$ and $F_{R,\frac{1}{2}}$.

$F_{LR,u}$ provides, via the usual twist deformation, a new Hopf $*$-algebra
\begin{gather*}
\Delta ^{F_{LR,u}}(P_{\mu }) = \sqrt{ 1\otimes 1 + \frac{u(1-u)}{\kappa^2}P\otimes P} \Delta ^{F_{L,u}}(P_\mu) \frac{1}{\sqrt{ 1\otimes 1 + \frac{u(1-u)}{\kappa^2}P\otimes P}} \\
\hphantom{\Delta ^{F_{LR,u}}(P_{\mu })}{} =
\frac{P_{\mu }\otimes \big(1+\frac{u}{\kappa }P\big)+\big(1-\frac{(1-u)}{\kappa }P\big)\otimes P_{\mu }}{1\otimes 1+ \frac{u(1-u)}{\kappa^2 }P\otimes P} ,
\label{copFL}\\
\Delta ^{F_{LR,u}}(D) =\sqrt{ 1\otimes 1 + \frac{u(1-u)}{\kappa^2}P\otimes P} \Delta ^{F_{L,u}}(D) \frac{1}{\sqrt{ 1\otimes 1 + \frac{u(1-u)}{\kappa^2}P\otimes P}} \\
\hphantom{\Delta ^{F_{LR,u}}(D)}{} = \sqrt{ 1\otimes 1 + \frac{u(1-u)}{\kappa^2}P\otimes P}\left(D\otimes \frac{1}{1+\frac{u}{\kappa}P}+\frac{1}{1-\frac{(1-u)}{\kappa}P}\otimes D \right)\\
\hphantom{\Delta ^{F_{LR,u}}(D)=}{} \times\sqrt{ 1\otimes 1 + \frac{u(1-u)}{\kappa^2}P\otimes P}
\end{gather*}
and
\begin{gather*}
S^{F_{LR,u}}(P_{\mu }) =-\frac{P_{\mu }}{1-(1-2u)\frac{1}{\kappa }P}, \\ 
 S^{F_{LR,u}}(D) = -\sqrt{\frac{\big(1-(1-u)\frac{P}{\kappa}\big)}{\big(1+\frac{u}{\kappa}P\big)}\left(1-(1-2u)\frac{P}{\kappa} \right)} D \sqrt{\frac{\big(1+\frac{u}{\kappa}P\big)}{1-(1-u)\frac{P}{\kappa}}\left(1-(1-2u)\frac{P}{\kappa} \right)}. 
\end{gather*}
These formulae can be calculated by using the following relation between the twists $F_{LR,u}$, $F_{L,u}$ and $F_{R,u}$:
\begin{gather*}
F^{-1}_{LR,u} =F^{-1}_{L,u}\frac{1}{\sqrt{ 1\otimes 1 + \frac{u(1-u)}{\kappa^2}P\otimes P}} = F^{-1}_{R,u}\sqrt{ 1\otimes 1 + \frac{u(1-u)}{\kappa^2}P\otimes P}.
\end{gather*}
We can also provide the relation between quantum $R$-matrices
\begin{gather*}
R_{LR,u} =\sqrt{ 1\otimes 1 + \frac{u(1-u)}{\kappa^2}P\otimes P} R_{L,u} \frac{1}{\sqrt{ 1\otimes 1 + \frac{u(1-u)}{\kappa^2}P\otimes P}}.
\end{gather*}

\section{Conclusions}
Jordanian deformations have been of interest in some recent
literature \cite{jhep17,Ballesteros-Jordconf+,Ballesteros-Jordconf++,ijgmmp2016,DJP_SIGMA2012,Ballesteros-Jordconf,tongeren+, MatYosh,MatYosh++,MatYosh+,MatYosh+++,
1903_zagreb,PRD-conformal,pv, tongeren}. In our previous paper \cite{MMPP} we have studied the simple genera\-li\-sa\-tion of the locally $r$-symmetric Jordanian twist. Following that idea now we have found another possible way of interpolating between two Jordanian twists $F_{0}$ and $F_{1}$. In both cases, we have introduced one real valued parameter $u$ and obtained the family of Jordanian twists which provides interpolation between the original unitary Jordanian twist (for $u=0$) and its flipped version (for $u=1$, up to minus sign in the deformation parameter). If $D^* = - D$ and $P_\mu^*=P_\mu$ both twists~$F_0$ and~$F_1$ are $*$-unitary while those for $u\neq 0,1$ are not. Only in the case of $u={1\over 2}$ the interpolating twists provide the real Hopf algebra structures with the deformed $*$-structures (obtained by deformation techniques from~\cite{majid}). Also only for this value of the parameter~$u$ the twists~$F_{R,u}$ and $F_{L,u}$ and $F_{LR,u}$ are $r$-symmetric. However, using unitary coboundary elements $\omega^*=\omega^{-1}$ one can interpolate preserving the original real $*$-structure. The new family of Jordanian twists~$F_{R,u}$~\eqref{FRu}, as all Jordanian twists mentioned in this work, provides the so-called $\kappa $-Minkowski noncommutative spacetime and has the support in the Poincar\'e--Weyl or conformal algebras as deformed symmetries of this noncommutative spacetime.

In this paper, starting from the twist \eqref{FRu}, we have found deformed Hopf algebra symmetry \eqref{copPFR}--\eqref{SDFR}, star products \eqref{stpFR} and corresponding realizations for noncommutative coordinates~$\hat x^\mu$~\eqref{x-from-FR}. Noncommutative coordinates induce the deformation of Weyl--Heisenberg algebra~\eqref{heis} as well.
Even though both of the proposed twists, the previous one~$F_{L,u}$ from~\cite{MMPP} and the one introduced here~$F_{R,u}$ provide the $\kappa $-Minkowski spacetime and have the support in the Poincar\'{e}--Weyl or conformal algebras as deformed symmetries of this noncommutative spacetime, the realizations and star products they induce differ. The new type of star product~\eqref{stpFR} obtained here contains additional term depending only on momenta $\mathcal{G}(u;k,q)$. The additional term $\chi^\mu(P)$ also appears in the realization of noncommuting coordinates in~\eqref{x-from-FR}. However, the Weyl--Heisenberg algebra form is the same in both $F_{L,u}$ and $F_{R,u}$ deformed cases~\eqref{heis}.

Therefore, mathematically equivalent deformations, may lead to differences in the physical phenomena. Many authors \cite{jhep17,DJP_SIGMA2012,1903_zagreb} have used star products to discuss physical consequences of deformations. In those cases the two families of Jordanian twists $F_{L,u}$, $F_{R,u}$ would lead to different outcomes. It would be interesting, for example, to investigate the deformation of differential and integral calculus in the context of the new versions of Jordanian twists proposed here and their application to second quantization~\cite{Fiore}. Also the differences in realizations of noncommutative coordinates could lead to different physical predictions, see, e.g., \cite{arXiv:0912.3299} where the influence of different realizations on modified dispersion relations between energy and momenta as well as time delay parameter was investigated.

\appendix

\section[Exponential formula, normal ordering and Weyl--Heisenberg algebra]{Exponential formula, normal ordering\\ and Weyl--Heisenberg algebra}\label{App0}

We recall that in the simplest case the Weyl--Heisenberg algebra $W_1$ (over the field~$\mathbb{C}$ of complex numbers) can be defined abstractly as a universal associative and unital algebra over $\mathbb{C}$ with two generators $x$, $p$ satisfying the relation
\begin{gather*}
xp-px={\rm i},
\end{gather*}
where ${\rm i}\in\mathbb{C}$ stands for the imaginary unit. Usually, the generator $p$ can be identified with the derivative $-{\rm i}\,{\rm d}/{\rm d}x$. Such realization makes it easy to remember the canonical action of $W_1$ onto the space of polynomial functions in one variable $\mathbb{C}[x]$. Therefore, any element $a\in W_1$ admits a canonical presentation either in the form of differential operator $a=\sum\limits_{s=0}^{N_p}a_s(x)p^s$, where $a_s(x)\in \mathbb{C}[x]$ or in the normal ordered form $a=\sum\limits_{r=0}^{N_x}x^r a_r(p)$, where $a_r(p)\in \mathbb{C}[p]$, see, e.g.,~\cite{CR}. This algebra can be also defined as a smash product of two polynomial algebras, i.e., $W_1=\mathbb{C}[x]\rtimes \mathbb{C}[p]$, where~$\mathbb{C}[x]$ plays a role of module algebra over the Hopf algebra $\mathbb{C}[p]$.\footnote{For the definition see, e.g.,~\cite{bp_jpa2016} and references therein.} Using smash product definition one can naturally extend~$W_1$ to $\mathbb{C}[[x]]\rtimes\mathbb{C}[p]$, i.e., replacing the polynomial algebra $\mathbb{C}[x]$ by its $x$-adic extension $\mathbb{C}[[x]]$ of formal power series (see, e.g.,~\cite{Comtet}), which bears the structure of (left) $\mathbb{C}[[x]]$ module. Its topological completion provides the algebra~$\widehat W_1$ with elements having a~form $\sum\limits_{r=0}^{\infty}x^r a_r(p)$, $a_r(p)\in \mathbb{C}[p]$.

For our purposes, however, one needs further extension by introducing a new commuting formal variable $k$ and taking $\widehat W_1[[k]]$ with elements of the form
\begin{gather*} \sum_{r=0}^{\infty}x^r a_r(k,p)=\sum_{r, s=0}^{\infty}x^r k^s a_{r,s}(p) , \end{gather*}
where $ a_{r,s}(p)\in \mathbb{C}[p] $ as before and $ a_{r}(k,p)=\sum\limits_{s=0}^\infty a_{r,s}(p)k^s\in \mathbb{C}[p][[k]] $. Now we are in position to formulate the following

\begin{Proposition}\label{propSM1}
In this framework:
\begin{enumerate}\itemsep=0pt
\item[$i)$] For any element $\phi\equiv \phi(p)\in \mathbb{C}[p]$ there exists a unique element $\Phi\equiv\Phi(k,p)\in\mathbb{C}[p][[k]]$ such that
\begin{gather*} 
\exp({\rm i} k x \phi(p)) = \sum_{r=0}^\infty\frac{( {\rm i}x)^r}{r!} [\Phi(k, p)]^r
\end{gather*}
holds true in $\widehat W_1[[k]]$. One should notice that the last expression can be denoted as
\linebreak $ {:}\exp({\rm i} x \Phi(k, p)){:}$,
where ${:}\dots {:}$ denotes normal ordering of the generators $x$, $p$ $($i.e., $x$'s left from $p$'s$)$.
\item[$ii)$] Moreover
\begin{gather*}
\Phi(k, p)=J ( k,p )-p,
\end{gather*}
where
\begin{gather}\label{J2dim}
J(k,p)={\rm e}^{-{\rm i}kx\phi(p)}p{\rm e}^{{\rm i}kx\phi(p)}.
\end{gather}
\item[$iii)$] $J(k,p)$ turns out to be a unique $($formal$)$ solution of the $($formal$)$ partial differential equation
\begin{gather}\label{Jdif1}
\frac{\partial}{\partial k }J(k,p) =\phi ( J(k,p))
\end{gather}
with the boundary condition: $J(0,p) = p$.
\end{enumerate}
\end{Proposition}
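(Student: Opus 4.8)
The plan is to treat $k$ as a flow parameter and reduce all three claims to a single first-order differential equation in $k$, exploiting that the exponent scales linearly in $k$. Writing $B := \mathrm{i}x\phi(p)$, the elements $kB$ for different $k$ are scalar multiples of one another and hence commute, so $\exp(\mathrm{i}kx\phi(p)) = \mathrm{e}^{kB}$ satisfies the operator equation $\partial_k \mathrm{e}^{kB} = B\,\mathrm{e}^{kB}$ with value $1$ at $k=0$, and is the unique solution of this initial value problem in $\widehat W_1[[k]]$. I would establish part $iii)$ first and independently: starting from the definition $J(k,p) = \mathrm{e}^{-kB}p\,\mathrm{e}^{kB}$, differentiate in $k$ to get $\partial_k J = \mathrm{e}^{-kB}[p,B]\,\mathrm{e}^{kB}$, compute $[p,B] = \phi(p)$ from $[p,x] = -\mathrm{i}$, and use that conjugation is an algebra automorphism to conclude $\partial_k J = \phi(\mathrm{e}^{-kB}p\,\mathrm{e}^{kB}) = \phi(J)$, with $J(0,p)=p$. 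Solving this recursively order-by-order in $k$ shows the solution is unique and lies in $\mathbb{C}[p][[k]]$, with $J = p + k\phi(p) + O(k^2)$; in particular $\Phi := J-p$ is a well-defined element of $\mathbb{C}[p][[k]]$ vanishing to order $\geq 1$ in $k$, so that $\sum_r \frac{(\mathrm{i}x)^r}{r!}\Phi^r$ converges in $\widehat W_1[[k]]$.

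The technical heart will be a commutation lemma that pushes a function of $p$ through a normal-ordered exponential in $x$. Setting $N := {:}\exp(\mathrm{i}x\Phi){:} = \sum_r \frac{(\mathrm{i}x)^r}{r!}\Phi^r$, I would prove
\begin{gather*}
\psi(p)\,N = N\,\psi(p+\Phi) = {:}\psi(p+\Phi)\exp(\mathrm{i}x\Phi){:}
\end{gather*}
for every power series $\psi$. The base case $\psi(p)=p$ is a direct computation using $p x^r = x^r p - \mathrm{i}r x^{r-1}$ together with the fact that $\Phi$ commutes with $p$: collecting the two resulting sums yields exactly $Np + N\Phi = N(p+\Phi)$. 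The general case then follows by induction, since $p$ and $\Phi$ commute, so $(p+\Phi)^n$ behaves as a single commuting symbol slid to the right of $N$.

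With the lemma in hand, part $i)$ would follow from the same uniqueness principle. Differentiating $N$ termwise gives $\partial_k N = {:}\mathrm{i}x(\partial_k\Phi)\exp(\mathrm{i}x\Phi){:}$, and since $\partial_k\Phi = \partial_k J = \phi(J) = \phi(p+\Phi)$, this equals ${:}\mathrm{i}x\,\phi(p+\Phi)\exp(\mathrm{i}x\Phi){:}$. On the other hand, applying the lemma with $\psi = \phi$ and prepending the factor $\mathrm{i}x$ shows $\mathrm{i}x\,\phi(p)\,N = {:}\mathrm{i}x\,\phi(p+\Phi)\exp(\mathrm{i}x\Phi){:}$ as well, so $\partial_k N = \mathrm{i}x\,\phi(p)\,N = BN$ with $N(0)=1$. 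By uniqueness of the solution of this initial value problem, $N = \mathrm{e}^{kB} = \exp(\mathrm{i}kx\phi(p))$, which is precisely the identity in $i)$; uniqueness of $\Phi$ is then immediate from uniqueness of the normal-ordered presentation, as the coefficient of $x^1$ already pins it down. Part $ii)$ is read off from this construction (where $\Phi$ was defined as $J-p$), and can be reconfirmed by comparing the special case $pN = N(p+\Phi)$ of the lemma with the conjugation identity $p\,\mathrm{e}^{kB} = \mathrm{e}^{kB}J$, which forces $\Phi = J-p$.

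The main obstacle I anticipate is the commutation lemma: everything else is bookkeeping around the linear-in-$k$ structure, but moving $\phi(p)$ cleanly through the normal-ordered exponential—keeping straight which objects are functions of $p$ alone (hence mutually commuting and pushed to the right) versus the genuine $x$--$p$ reorderings—is where the combinatorics sit, and it is what converts the naive $k$-derivative into $BN$. A secondary point requiring care is the justification that all manipulations (termwise differentiation, infinite sums, the homomorphism property of conjugation) are legitimate in the completed algebra $\widehat W_1[[k]]$, which is guaranteed by the vanishing of $\Phi$ to order $\geq 1$ in $k$.
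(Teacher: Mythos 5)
Your proof is correct, but it takes a genuinely different route from the paper's. The paper's mechanism is the adjoint action $\mathrm{ad}_p=[p,\,\cdot\,]$, which acts as a derivative in $x$ and extracts the unknown normal-ordered coefficients via the Taylor-type formula $\Phi_m(k,p)=(\mathrm{ad}_p)^m\exp({\rm i}kx\phi(p))\big|_{x=0}$; combining this with the induction $(\mathrm{ad}_p)^n\exp({\rm i}kx\phi(p))=\exp({\rm i}kx\phi(p))(J-p)^n$, which follows directly from the definition of $J$ as a conjugation, the paper gets $\Phi_n=(J-p)^n$, hence parts $i)$ and $ii)$ in one stroke, and proves $iii)$ last by the same differentiation of the conjugation formula that you perform. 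You reverse the direction: you prove $iii)$ first, use the order-by-order recursion for the ODE $\partial_k J=\phi(J)$, $J(0,p)=p$, to conclude that $J$, and hence $\Phi=J-p$, lies in $\mathbb{C}[p][[k]]$, and then identify the constructed candidate $N={:}\exp({\rm i}x\Phi){:}$ with $\exp({\rm i}kx\phi(p))$ through uniqueness for the initial value problem $\partial_k Y=BY$, $Y(0)=1$, your commutation lemma $\psi(p)N=N\psi(p+\Phi)$ supplying the key identity $\partial_k N=BN$. The paper's argument is shorter---it needs no analogue of your commutation lemma, since $p\,{\rm e}^{kB}={\rm e}^{kB}J$ is just the definition of $J$---but your route makes explicit a point the paper uses tacitly: that $J$ is a function of $k$ and $p$ alone, which the paper's proof needs both in its inductive step (where $[p,\Phi^n]=0$ is used) and in its evaluation at $x=0$ of ${\rm e}^{{\rm i}kx\phi(p)}\Phi^n$. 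Your ODE recursion supplies that fact cleanly, and your remark that uniqueness of $\Phi$ is read off from the coefficient of $x^1$ in the normal-ordered presentation is likewise a tidy addition.
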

\begin{proof} Since $\exp({\rm i} k x \phi(p))\in \widehat W_1[[k]]$, one can employ the adjoint action $\mathrm{ad}_p a=[p,a]$ to calculate
\begin{gather*}
\exp({\rm i} k x \phi(p)) = \sum_{n=0}^\infty \frac{({\rm i}x)^n}{n!}\Phi_n(k,p),\\
(\mathrm{ad}_p)^m \exp({\rm i} k x \phi(p)) = \sum_{n=m}^\infty \frac{({\rm i}x)^{n-m}}{(n-m)!}\Phi_n(k,p),\\
(\mathrm{ad}_p)^m \exp({\rm i} k x \phi(p))|_{x=0} = \Phi_m(k,p).
\end{gather*}
Hence
\begin{gather}\label{eq3}
\exp({\rm i} k x \phi(p))=\sum_{n=0}^\infty \frac{({\rm i}x)^n}{n!}\big((\mathrm{ad}_p)^n (\exp({\rm i} k x \phi(p))\big)\big|_{x=0}.
\end{gather}
On the other hand, introducing $J(k,p)$ by formula (\ref{J2dim}), one gets (cf.~(\ref{eq3}))
\begin{gather*}
\mathrm{ad}_p \big({\rm e}^{{\rm i}k x \phi(p)}\big) = \big[p,{\rm e}^{{\rm i}k x \phi(p)}\big]=
{\rm e}^{{\rm i}k x \phi(p)}\big({\rm e}^{-{\rm i}kx\phi(p)}p{\rm e}^{{\rm i}kx\phi(p)}-p\big) \\
\hphantom{\mathrm{ad}_p ({\rm e}^{{\rm i}k x \phi(p)})}{} = {\rm e}^{{\rm i}k x \phi(p)}(J(k,p)-p)= {\rm e}^{{\rm i}k x \phi(p)}\Phi(k,p),
\end{gather*}
i.e., $\Phi(k,p)=J(k,p)-p $. Similarly, by induction
\begin{gather*}
(\mathrm{ad}_p)^n \big({\rm e}^{{\rm i}k x \phi(p)}\big) = {\rm e}^{{\rm i}k x \phi(p)}(\Phi(k,p))^n.
\end{gather*}
It follows that
\begin{gather*}
\Phi_n(k,p)=(\mathrm{ad}_p)^n \big({\rm e}^{{\rm i}k x \phi(p)}\big)\big|_{x=0}=(\Phi(k,p))^n=(J(k,p)-p)^n,
\end{gather*}
hence
\begin{gather*}
\exp({\rm i} k x \phi(p))=\sum_{n=0}^\infty \frac{({\rm i}x)^n}{n!}(\Phi(k,p))^n= {:}\exp({\rm i} k x \Phi(k,p)){:} .
\end{gather*}
Finally, differentiation of (\ref{J2dim}) gives
 \begin{gather*}
\frac{\partial}{\partial k }J(k,p) ={\rm e}^{-{\rm i}kx\phi(p)}[-{\rm i}x,p]\phi(p){\rm e}^{{\rm i}kx\phi(p)}=
{\rm e}^{-{\rm i}kx\phi(p)}\phi(p){\rm e}^{{\rm i}kx\phi(p)}= \phi(J(k,p)).
\end{gather*}
This provides the equation~(\ref{Jdif1}) together with the boundary condition: $J(0,p) = p$.
\end{proof}

\section{The multidimensional case}\label{mult}

Replacing $W_1$ by the Weyl--Heisenberg algebra $W_N$ with $2N$ generators $x^\alpha$, $p_\beta$:
\begin{gather*}
x^\alpha x^\beta-x^\beta x^\alpha=p_\alpha p_\beta-p_\beta p_\alpha=0 , \qquad x^\alpha p_\beta-p_\beta x^\alpha={\rm i}\delta^\alpha_\beta,
\end{gather*}
where $\alpha, \beta=0,1,\ldots,N-1$, allows for the following generalisation~\cite{svrtan}

\begin{Proposition}$\!\!\!\!$\footnote{Proof of Proposition~\ref{propSM2} and it's generalisations will be given elsewhere.}\label{propSM2}
For arbitrary realization $\phi(p)$ for the noncommutative coordinates \linebreak $\hat{x}^{\mu }=x^{\alpha }\phi_{\alpha}^{ \mu }(p) $ it holds:
\begin{enumerate}\itemsep=0pt
\item[$i)$]
\begin{gather*}
\exp \big( {\rm i}k_{\alpha }\hat{x}^{\alpha }\big) ={:}\exp \big( {\rm i}x^{\alpha }\Phi_\alpha(k,p)\big){:},
\end{gather*}
where ${:}\dots{:}$ denotes normal ordering of the generators $x$, $p$ $($i.e., $x$'s left from $p$'s$)$.
\item[$ii)$]
\begin{gather*}
\Phi_\mu(k,p)={\rm e}^{-{\rm i}k_\alpha \hat{x}^\alpha}p_\mu {\rm e}^{{\rm i}k_\alpha \hat{x}^\alpha}-p_\mu= J_\mu(k,p) -p_\mu,
\end{gather*}
i.e.,
\begin{gather}\label{epe}
J_\mu(k,p)={\rm e}^{-{\rm i}k_\alpha \hat{x}^\alpha}p_\mu {\rm e}^{{\rm i}k_\alpha \hat{x}^\alpha}.
\end{gather}
\item[$iii)$] $J_{\mu }(k,p)$ satisfies
\begin{gather}\label{Jdif}
\frac{{\rm d}}{{\rm d}\lambda }J_{\mu } ( \lambda k,p ) =k_{\alpha}\phi^\alpha_{\mu} ( J (\lambda k,p ) )
\end{gather}
with the boundary condition: $J_\mu(0,p)=p_\mu$.
\end{enumerate}
\end{Proposition}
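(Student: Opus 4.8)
The plan is to mirror the proof of Proposition~\ref{propSM1}, replacing the single pair $(x,p)$ by the commuting families $(x^\alpha)$, $(p_\beta)$ and the single adjoint action $\mathrm{ad}_p$ by the commuting family $\mathrm{ad}_{p_\mu}=[p_\mu,\,\cdot\,]$. I work throughout in the completed algebra $\widehat W_N[[k]]$ where these exponentials make sense, and abbreviate $A=k_\alpha\hat x^\alpha=k_\alpha x^\beta\phi_\beta^\alpha(p)$. First I would record the basic commutator: since $[p_\mu,x^\beta]=-\mathrm{i}\delta_\mu^\beta$ and $[p_\mu,\phi_\beta^\alpha(p)]=0$, one gets $[p_\mu,A]=-\mathrm{i}k_\alpha\phi^\alpha_\mu(p)$, so that $\mathrm{ad}_{p_\mu}$ acts as $-\mathrm{i}\,\partial/\partial x^\mu$ on any normal-ordered element. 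Introducing $J_\mu(k,p):=\mathrm{e}^{-\mathrm{i}A}p_\mu\mathrm{e}^{\mathrm{i}A}$ as in (ii), the same manipulation as in Proposition~\ref{propSM1} gives $\mathrm{ad}_{p_\mu}(\mathrm{e}^{\mathrm{i}A})=[p_\mu,\mathrm{e}^{\mathrm{i}A}]=\mathrm{e}^{\mathrm{i}A}(J_\mu-p_\mu)=\mathrm{e}^{\mathrm{i}A}\Phi_\mu$ with $\Phi_\mu:=J_\mu-p_\mu$.

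I would treat part (iii) next, since it also certifies that $\Phi_\mu$ is free of $x$. Rescaling $k\mapsto\lambda k$ and differentiating $J_\mu(\lambda k,p)=\mathrm{e}^{-\mathrm{i}\lambda A}p_\mu\mathrm{e}^{\mathrm{i}\lambda A}$ in $\lambda$ gives $\frac{\mathrm{d}}{\mathrm{d}\lambda}J_\mu(\lambda k,p)=\mathrm{e}^{-\mathrm{i}\lambda A}\,\mathrm{i}[p_\mu,A]\,\mathrm{e}^{\mathrm{i}\lambda A}=\mathrm{e}^{-\mathrm{i}\lambda A}k_\alpha\phi^\alpha_\mu(p)\mathrm{e}^{\mathrm{i}\lambda A}$; because conjugation by $\mathrm{e}^{\mathrm{i}\lambda A}$ is an algebra automorphism sending $p_\nu$ to $J_\nu(\lambda k,p)$, the right-hand side equals $k_\alpha\phi^\alpha_\mu(J(\lambda k,p))$, which is the claimed equation, with boundary value $J_\mu(0,p)=p_\mu$. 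Since the right-hand side of this system is a closed expression in the $J_\nu$ alone, its unique solution lies in $\mathbb C[p][[k]]$; hence $\Phi_\mu=J_\mu-p_\mu$ is genuinely a function of $k,p$ only and commutes with every $p_\nu$.

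For part (i), the crux, I would iterate the relation from the first step. Using that $\Phi_\mu$ commutes with the $p_\nu$ (so $\mathrm{ad}_{p_\nu}\Phi_\mu=0$) and that the $\Phi_\mu$ commute among themselves, an easy induction gives $\mathrm{ad}_{p_{\mu_1}}\cdots\mathrm{ad}_{p_{\mu_n}}(\mathrm{e}^{\mathrm{i}A})=\mathrm{e}^{\mathrm{i}A}\,\Phi_{\mu_1}\cdots\Phi_{\mu_n}$, so evaluation at $x=0$ (where $A=0$ and $\mathrm{e}^{\mathrm{i}A}=1$) returns $\Phi_{\mu_1}\cdots\Phi_{\mu_n}$. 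Performing the identical computation on the proposed normal-ordered exponential ${:}\exp(\mathrm{i}x^\alpha\Phi_\alpha){:}=\sum_n\frac{\mathrm{i}^n}{n!}x^{\alpha_1}\cdots x^{\alpha_n}\Phi_{\alpha_1}\cdots\Phi_{\alpha_n}$, where $\mathrm{ad}_{p_\mu}$ again acts as $-\mathrm{i}\,\partial/\partial x^\mu$, produces the same value $\Phi_{\mu_1}\cdots\Phi_{\mu_n}$ at $x=0$ for every multi-index. Because a normal-ordered element of $\widehat W_N[[k]]$ is determined by these iterated adjoint actions at $x=0$ (the multivariate Maclaurin coefficients in the commuting variables $x^\mu$), the two sides coincide, which is statement (i); statement (ii) is then just the identification of the exponent already made.

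The main obstacle I expect is the multi-index bookkeeping in part (i): verifying that the iterated actions of the commuting $\mathrm{ad}_{p_\mu}$ correctly extract the symmetric coefficients of the $x$-expansion, and that, precisely because the $\Phi_\mu$ commute, these coefficients factorize as the ordered products $\Phi_{\mu_1}\cdots\Phi_{\mu_n}$ that assemble into the exponential series. The non-commutativity of the $\hat x^\alpha$ themselves is harmless for this argument, since it only ever uses the two commuting families $\{x^\alpha\}$ and $\{p_\beta\}$ together with the single relation $[p_\mu,x^\beta]=-\mathrm{i}\delta_\mu^\beta$; the same ODE-uniqueness used in part (iii) could alternatively be invoked to deduce (i), by checking that both sides solve the same $\lambda$-evolution.
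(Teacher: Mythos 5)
Your proof is correct, but note that the paper itself does not actually prove this proposition: the footnote defers the proof of Proposition~\ref{propSM2} ``elsewhere'', and the text following the statement only derives part $iii)$, by applying the Euler operator $k_{\alpha}\partial/\partial k_{\alpha}$ to~\eqref{epe} and then rescaling $k_{\alpha}\rightarrow\lambda k_{\alpha}$. Your treatment of $iii)$ is that same argument in a different order (you rescale first and differentiate in $\lambda$; your use of conjugation as an automorphism to replace ${\rm e}^{-{\rm i}\lambda A}\phi(p){\rm e}^{{\rm i}\lambda A}$ by $\phi(J(\lambda k,p))$ is exactly the paper's use of the realization), so there you agree with the paper. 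For parts $i)$ and $ii)$ you supply what the paper omits, by transplanting its one-dimensional proof of Proposition~\ref{propSM1} to $W_N$, and the genuinely new issue in $N$ dimensions --- which you correctly isolate --- is that the induction $\mathrm{ad}_{p_{\mu_1}}\cdots\mathrm{ad}_{p_{\mu_n}}\big({\rm e}^{{\rm i}A}\big)={\rm e}^{{\rm i}A}\Phi_{\mu_1}\cdots\Phi_{\mu_n}$ needs the $\Phi_{\mu}$ to be $x$-free, hence commuting with every $p_{\nu}$ and with each other; in the one-dimensional proof this point is invisible because it is absorbed into the notation $\Phi(k,p)$. Your ODE-uniqueness justification of this lemma is sound: the components $J_{\nu}(\lambda k,p)$ commute among themselves because conjugation preserves $[p_{\mu},p_{\nu}]=0$, so $\phi^{\alpha}_{\mu}(J)$ is unambiguous, the $\lambda$-Taylor recursion determines the solution uniquely (equal lower-order coefficients give equal products, taken in the same order), and an evident solution exists inside $\mathbb{C}[p][[k]]$. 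A still more economical route to the same lemma, closer in spirit to the paper's adjoint-action computations, is to note that $[A,f(p)]={\rm i}k_{\alpha}\big(\partial f/\partial p_{\beta}\big)\phi^{\alpha}_{\beta}(p)$ is again a function of $p$ alone, so by induction every $(\mathrm{ad}_{A})^{n}(p_{\mu})$, and hence $J_{\mu}=\sum\limits_{n\geq 0}\frac{(-{\rm i})^{n}}{n!}(\mathrm{ad}_{A})^{n}(p_{\mu})$, lies in $\mathbb{C}[p][[k]]$; either way, your argument closes the proof.
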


The last relation~\eqref{Jdif} can be shown in the following way. Applying $k_{\alpha }\frac{\partial }{\partial k_{\alpha }}$ on both sides to~\eqref{epe} we have
\begin{gather*}
{\rm e}^{-{\rm i}k\hat{x}}[-{\rm i}k\hat{x},p_{\mu }]{\rm e}^{{\rm i}k\hat{x}}=k_{\alpha }\frac{\partial }{\partial k_{\alpha }} ( J_{\mu }(k,p) ).
\end{gather*}

Using the form of the realization for $\hat{x}^{\mu }=x^{\alpha }\phi_{\alpha}^{\mu}(p)$, we obtain
\begin{gather*}
k_{\alpha }\frac{\partial }{\partial k_{\alpha }}( J_{\mu } (k,p)) =k_{\alpha }\phi^\alpha_{\mu }( J(k,p)).
\end{gather*}
In order to simplify this equation we introduce the change of variables: $k_{\alpha }\rightarrow \lambda k_{\alpha }$ and use the identity
\begin{gather*}
\lambda \frac{{\rm d}}{{\rm d}\lambda }=\lambda k_{\alpha }\frac{\partial }{\partial (\lambda k_{\alpha })}=k_{\alpha }\frac{\partial }{\partial k_{\alpha }},
\end{gather*}
which leads to \eqref{Jdif}.

Further generalisation admits more general class of realizations. If
\begin{gather*}
\hat{x}^{\mu }=x^{\alpha }\phi_\alpha^\mu(p) +\chi^\mu ( p)
\end{gather*}
for $\phi_\alpha^\mu(p)$ and $\chi^\mu (p)$ as in~\eqref{x-from-FR} then we have
\begin{gather*}
\exp \big( {\rm i}k_{\alpha}\hat{x}^{\alpha}\big) ={:}\exp \big( {\rm i}x^{\alpha} ( J_{\alpha }(k,p) -p_{\alpha } ) \big){:} \, \exp ( {\rm i}Q(k,p)).
\end{gather*}
Therefore we get
\begin{gather*}
{\rm e}^{-{\rm i}k_{\alpha }x^{\beta }\phi_{\beta}^{\alpha }(p) }p_{\mu }{\rm e}^{{\rm i}k\hat{x}}=J_{\mu }(k,p) {\rm e}^{{\rm i}Q(k,p) }.
\end{gather*}
Applying $k_{\alpha }\frac{\partial }{\partial k_{\alpha }}$ on both sides we finally obtain
\begin{gather*}
k_{\alpha }\frac{\partial }{\partial k_{\alpha }}Q(k,p) =k_{\alpha }\chi ^{\alpha }(J(k,p)),
\end{gather*}
i.e., after changing variables $k_{\alpha }\rightarrow \lambda k_{\alpha }$ we find
\begin{gather*}
\frac{{\rm d}}{{\rm d}\lambda }Q ( \lambda k,p ) =k_{\alpha }\chi ^{\alpha} ( J ( \lambda k,p ) ).
\end{gather*}
Note that if $\chi ^{\alpha }(p) \neq 0$ then
\begin{gather*}
{\rm e}^{{\rm i}k_{\alpha }\hat{x}^{\alpha}}\vartriangleright {\rm e}^{{\rm i}q_{\beta }x^{\beta}}={\rm e}^{{\rm i} J_{\alpha } ( k,q ) {x}^\alpha+{\rm i}Q ( k,q ) },
\end{gather*}
where the action is defined in \eqref{action2} below. For $k_{\alpha }=0$ it holds: $J_{\mu } ( 0,q ) =q_{\mu }$ and $Q ( 0,q ) =0.$ If $q_{\mu }=0$ then $J_{\mu }( k,0) =K_{\mu }( k) $ and $Q( k,0) =g( k)$.

\section{Star products}
\subsection[Star products for $F_{L,u}$: general formulas]{Star products for $\boldsymbol{F_{L,u}}$: general formulas}\label{App1}

If we start with realization
\begin{gather*}
\hat{x}^{\mu }=x^{\alpha }\phi_{\alpha}^{\mu }(P),
\end{gather*}
then, for action $\triangleright $, defined by
\begin{gather}\label{action2}
x^{\mu }\triangleright f(x)=x^{\mu }f(x),\qquad P_{\mu }\triangleright f(x)=-{\rm i}\frac{\partial f(x)}{\partial x^{\mu }},
\end{gather}%
it holds
\begin{gather*}
{\rm e}^{{\rm i}k\cdot \hat{x}}\triangleright 1 ={\rm e}^{{\rm i}k_{\mu }x^{\alpha }{\phi}_{\alpha}^{\mu }(P)}\triangleright 1={\rm e}^{{\rm i}K(k)\cdot x}, \\ 
{\rm e}^{{\rm i}k\cdot \hat{x}}\triangleright {\rm e}^{{\rm i}q\cdot x} ={\rm e}^{{\rm i}k_{\mu }x^{\alpha }{\phi}_{\alpha}^{\mu }(P)}\triangleright {\rm e}^{{\rm i}q\cdot x}={\rm e}^{{\rm i}J(k,q)\cdot x}, 
\end{gather*}
where functions $K_{\mu }(k)$ and $J_{\mu }(k,q)$ can be calculated from the following differential equations
\begin{gather} \label{diffK}
\frac{{\rm d} K^{\mu }( \lambda k)}{{\rm d}\lambda} ={\phi^{\mu}_{\alpha }}(K(\lambda k))k^\alpha, \\
\frac{{\rm d} J^{\mu }(\lambda k,q)}{{\rm d}\lambda} ={\phi^{\mu}_{\alpha }}(J(\lambda k,q)) k^\alpha, \label{diffJ}
\end{gather}
with boundary conditions $K_{\mu }(0)=0$ and $J_{\mu } ( k,0 ) =K_{\mu }( k) $, $J_{\mu }(0,q)=q^{\mu }$.

The star product is given by
\begin{gather*}
{\rm e}^{{\rm i}k\cdot x}\star {\rm e}^{{\rm i}q\cdot x}={\rm e}^{{\rm i}K^{-1}(k) \cdot \hat{x}
}\triangleright {\rm e}^{{\rm i}q\cdot x}={\rm e}^{{\rm i}J(K^{-1}(k) ,q)\cdot x}={\rm e}^{{\rm i}\mathcal{D}(k,q)\cdot x},
\end{gather*}
where
\begin{gather*} 
\mathcal{D}(k,q)=J\big(K^{-1}(k),q\big)
\end{gather*}
with the inverse function of $K_{\mu }(k) $ defined as $K_{\mu}\big( K^{-1}(k) \big) =K_{\mu }^{-1} ( K(k) ) =k_{\mu }.$

\subsection[Star products for $F_{R,u}$: general formulas]{Star products for $\boldsymbol{F_{R,u}}$: general formulas}\label{App2}
For a more general realization given by
\begin{gather*}
\hat{x}^{\mu }=x^{\alpha }\phi_{\alpha}^{\mu }(P)+\chi^{\mu }(P),
\end{gather*}
it holds
\begin{gather*}
{\rm e}^{{\rm i}k\cdot \hat{x}}\triangleright 1 ={\rm e}^{{\rm i}K(k)\cdot x+{\rm i}g(k)}, \\ 
{\rm e}^{{\rm i}k\cdot \hat{x}}\triangleright {\rm e}^{{\rm i}q\cdot x} ={\rm e}^{{\rm i}J(k,q)\cdot x+{\rm i}Q(k,q)}. 
\end{gather*}

$K(k)$, $J(k,q)$ satisfy the same differential equation as in Appendix~\ref{App1}. Similarly we can determine $g(k) $ and $Q(k,q)$ by differentiating with respect to $\lambda$ \cite{hreal, mercati}
\begin{gather} \label{diffg}
\frac{{\rm d}g(\lambda k)}{{\rm d}\lambda } = k\cdot \chi (K(\lambda k)), \\
\frac{{\rm d}Q(\lambda k,q)}{{\rm d}\lambda } = k\cdot \chi (J(\lambda k,q)).\label{diffQ}
\end{gather}
The boundary condition is $Q ( k,0 ) =g ( k )$, $g (0 ) =Q ( 0,q ) =0$. This gives \cite{hreal, mercati}
\begin{gather*} 
g(k)=Q(k,0) =\int_{0}^{1}{\rm d}\lambda [ k\cdot \chi (K(\lambda k)) ], \\
Q(k,q) =\int_{0}^{1}{\rm d}\lambda [ k\cdot \chi (J(\lambda k,q)) ]. 
\end{gather*}

The star product is
\begin{gather*} 
{\rm e}^{{\rm i}k\cdot x}\star {\rm e}^{{\rm i}q\cdot x} ={\rm e}^{{\rm i}K^{-1}(k)\cdot \hat{x}
-g(K^{-1}(k))}\triangleright {\rm e}^{{\rm i}q\cdot x} ={\rm e}^{{\rm i}J(K^{-1}(k),q)\cdot x+{\rm i}Q(K^{-1}(k),q)-{\rm i}Q(K^{-1}(k),0)}.
\end{gather*}
Now we take
\begin{gather}
\mathcal{D} ( k,q ) =J\big( K^{-1}(k) ,q\big) ,\nonumber \\
\mathcal{G}(k,q) =Q\big(K^{-1}(k),q\big)-Q\big(K^{-1}(k),0\big).\label{DJK}
\end{gather}
Therefore, it follows that
\begin{gather*}
{\rm e}^{{\rm i}k\cdot x}\star {\rm e}^{{\rm i}q\cdot x}={\rm e}^{{\rm i}\mathcal{D}(k,q)\cdot x+{\rm i}\mathcal{G}(k,q)}.
\end{gather*}

\subsection[$F_{L,u}$ and $F_{R,u}$: explicit calculations]{$\boldsymbol{F_{L,u}}$ and $\boldsymbol{F_{R,u}}$: explicit calculations}\label{App3}
Realizations of noncommutative coordinates for $F_{L,u}$ is \eqref{x-from-twist}
\begin{gather*}
\hat{x}^{\mu }=\left[ x^{\mu }+\frac{{\rm i}(1-u)}{\kappa }v^{\mu }D\right] \left(1+\frac{u}{\kappa }P\right),\qquad u\in [0,1],
\end{gather*}
and for $F_{R,u}$ \eqref{x-from-FR}
\begin{gather*}
\hat{x}^{\mu }=\left(x^{\mu }+\frac{{\rm i}(1-u) }{\kappa }v^{\mu }D\right)\left(1+\frac{u}{\kappa }P\right)+\frac{{\rm i}u(1-u)}{\kappa ^{2}}v^{\mu }P,\qquad u\in [0,1].
\end{gather*}
From these realizations the form of the function $\phi _{\alpha}^{\mu}(P)$ can be read as
\begin{gather*}
\phi_\alpha^{\mu }(P)=\left(\delta _{\alpha}^{\mu}-\frac{(1-u)}{\kappa }v^{\mu }P_{\alpha }\right)\left(1+\frac{u}{\kappa }P\right),
\end{gather*}
where $P=v^{\alpha }P_{\alpha }$ is used as a shortcut.

Note that it is the same for both realizations \eqref{x-from-twist}, \eqref{x-from-FR}. Therefore, both of these realizations have the same form of the functions $K(k)$, $K^{-1}(k)$, $J(k,q)$ and $\mathcal{D}(k,q)$, see below for the explicit calculations.

To obtain $K^{\mu }(k)$ and $J^{\mu }(\lambda k,q)$, the following differential equations \eqref{diffK}, \eqref{diffJ} need to be solved
\begin{gather}
\frac{{\rm d}K^{\mu }(\lambda k)}{{\rm d}\lambda } =\phi_\alpha^{\mu }(K(\lambda
k))k^{\alpha }=\left(\delta _{\alpha }^{\mu }-\frac{(1-u)}{\kappa }v^{\mu
}K_{\alpha }(\lambda k)\right)\left(1+\frac{u}{\kappa }K(\lambda k)\right)k^{\alpha }, \label{firstK} \\
\frac{{\rm d}J^{\mu }(\lambda k,q)}{{\rm d}\lambda } =\phi_\alpha^{\mu}(J(\lambda k,q))k^{\alpha }=\left(\delta _{\alpha }^{\mu }-\frac{(1-u)}{\kappa }v^{\mu }J_{\alpha }(\lambda k,q)\right)\left(1+\frac{u}{\kappa }J(\lambda k,q)\right)k^{\alpha }, \label{secondJ}
\end{gather}
after using the explicit form of $\phi $. Here the shortcut notation $P=v^{\alpha }P_{\alpha }$ has been extended to $K(\lambda k)=v^{\alpha}K_{\alpha }(\lambda k)$ and $J(\lambda k,q)=v^{\alpha }J_{\alpha }(\lambda k,q)$. The solution of the first equation~\eqref{firstK} is
\begin{gather}
K^{\mu }(k)=k^{\mu }\frac{{\rm e}^{\frac{1}{\kappa }v\cdot k}-1}{\frac{1}{\kappa }v\cdot k}\frac{1}{(1-u){\rm e}^{\frac{1}{\kappa }v\cdot k}+u}. \label{solK}
\end{gather}

The inverse function $\big(K^{-1}\big)^{\mu }(k)$ is
\begin{gather*}
\big(K^{-1}\big)^{\mu }(k)=k^{\mu }\frac{1}{\frac{1}{\kappa }v\cdot k}\ln \left(\frac{1+\frac{u}{\kappa }(v\cdot k) }{1-\frac{(1-u)}{\kappa }(v\cdot k) }\right).
\end{gather*}
The function $J^{\mu }(k,q)$ is calculated similarly. The solution of the second equation \eqref{secondJ} is
\begin{gather}
J_{\mu }(k,q)=\frac{K_{\mu }(k) \big( 1+\frac{u}{\kappa }(v\cdot q) \big) +\big( 1-\frac{(1-u) }{\kappa } ( v\cdot K(k)) \big) q_{\mu }}{1+\frac{u(1-u)}{\kappa ^{2}} ( v\cdot K(k) ) (v\cdot q) }
\label{solJ}
\end{gather}
and $K^{\mu }(k)=J_{\mu }(k,0)$.

For $F_{R,u}$ case, we have $\chi ^{\mu }(P)=\frac{{\rm i}u(1-u)}{\kappa ^{2}}v^{\mu }P$, so we can determine $g(k) $ and $Q(k,q)$ by differentiating with respect to $\lambda $, i.e., using \eqref{diffg}, \eqref{diffQ},
\begin{gather*}
\frac{{\rm d}g(\lambda k)}{{\rm d}\lambda } = k\cdot \chi (K(\lambda k))=\frac{{\rm i}u(1-u)}{\kappa ^{2}} ( k\cdot v ) (v\cdot K(\lambda k)), \\
\frac{{\rm d}Q(\lambda k,q)}{{\rm d}\lambda } = k\cdot \chi (J(\lambda k,q))=\frac{{\rm i}u(1-u)}{\kappa ^{2}} ( k\cdot v ) (v\cdot J(\lambda k,q)).
\end{gather*}
To find the solution to the first equation we use \eqref{solK}
\begin{gather*}
v\cdot K(k)=\frac{\kappa \big( {\rm e}^{\frac{1}{\kappa }v\cdot k}-1\big) }{(1-u){\rm e}^{\frac{1}{\kappa }v\cdot k}+u},
\end{gather*}
and for the second equation we take \eqref{solJ}
\begin{gather*}
v\cdot J(k,q)=\frac{ ( v\cdot K(k) ) \big( 1+\frac{u}{\kappa }(v\cdot q) \big) +\big( 1-\frac{(1-u)
}{\kappa }( v\cdot K(k) ) \big) ( v\cdot q) }{1+\frac{u(1-u)}{\kappa ^{2}}( v\cdot K(k)) (v\cdot q) }.
\end{gather*}
So $g(k)$ and $Q(k,q)$ are calculated as
\begin{gather*}
g(k)=Q(k,0) =\frac{{\rm i}u(1-u)}{\kappa ^{2}} ( k\cdot v ) \int_{0}^{1}{\rm d}\lambda [ v\cdot K(\lambda k) ], \\
Q(k,q) =\frac{{\rm i}u(1-u)}{\kappa ^{2}} ( k\cdot v ) \int_{0}^{1}{\rm d}\lambda [ v\cdot J(\lambda k,q) ].
\end{gather*}
The solutions are
\begin{gather*}
g(k) = Q(k,0)={\rm i}\ln \left( u{\rm e}^{-\frac{(1-u) }{_{\kappa }}v\cdot k}+(1-u) {\rm e}^{\frac{u}{_{\kappa }}v\cdot k}\right), \\
Q(k,q) = {\rm i}\ln \left( u\left( 1-(1-u) \frac{1}{\kappa }v\cdot q\right) {\rm e}^{-\frac{(1-u) }{_{\kappa }}v\cdot k}+ (1-u) \left( 1+u\frac{1}{\kappa }v\cdot q\right) {\rm e}^{\frac{u}{_{\kappa }}v\cdot k}\right).
\end{gather*}
Using equations \eqref{DJK}, it follows that
\begin{gather*}
\mathcal{D}_{\mu }(u;k,q)=\frac{k_{\mu }\big(1+\frac{u}{\kappa }(v\cdot q)\big)+\big(1-\frac{(1-u)}{\kappa }(v\cdot k)\big) q_{\mu }}{1+\frac{u(1-u)}{\kappa ^{2}}(v\cdot k)(v\cdot q)}
\end{gather*}
and
\begin{gather*}
\mathcal{G}(u;k,q)={\rm i}\ln \left( 1+\frac{u(1-u)}{\kappa ^{2}} ( v\cdot k) (v\cdot q) \right),
\end{gather*}
i.e.,
\begin{gather*}
{\rm e}^{{\rm i}\mathcal{G}(u;k,q)}=\frac{1}{1+\frac{u(1-u)}{\kappa ^{2}}( v\cdot k) (v\cdot q) }.
\end{gather*}

Compare the above with \eqref{DFL} and~\eqref{GFR}.

\subsection*{Acknowledgments}

This work has been supported by COST (European Cooperation in Science and
Technology) Action MP1405 QSPACE. AB is supported by Polish National
Science Center (NCN), project UMO-2017/27/B/ST2/01902. We are grateful to Zoran \v{S}koda for his comments.
We would like to thank the referees for their constructive input.

\pdfbookmark[1]{References}{ref}
\LastPageEnding

\end{document}